\providecommand{\algorithmname}{Algorithm}
\title[Conformal causal inference for CRTs]
{Conformal causal inference for cluster randomized trials: model-robust inference without asymptotic approximations}
\author{Bingkai Wang$^{1,*}$\email{bingkai.w@gmail.com}, 
Fan Li$^{2,3}$, and Mengxin Yu$^{4}$ \\
$^{1}$Department of Biostatistics, University of Michigan, Ann Arbor, MI, USA\\
$^{2}$Department of Biostatistics, Yale School of Public Health, New Haven, CT, USA\\
$^{3}$Center for Methods in Implementation and Prevention Science, Yale School of Public Health,\\ New Haven, CT, USA\\
$^{4}$Department of Statistics and Data Science, University of Pennsylvania, Philadelphia, PA, USA}
\begin{document}





\pagerange{\pageref{firstpage}--\pageref{lastpage}} 




\label{firstpage}


\begin{abstract}
{Traditional statistical inference in cluster randomized trials typically invokes the asymptotic theory that requires the number of clusters to approach infinity. In this article, we propose an alternative conformal causal inference framework for analyzing cluster randomized trials that achieves the target inferential goal in finite samples without the need for asymptotic approximations. Different from traditional inference focusing on estimating the average treatment effect, our conformal causal inference aims to provide prediction intervals for the difference of counterfactual outcomes, thereby providing a new decision-making tool for clusters and individuals in the same target population. We prove that this framework is compatible with arbitrary working outcome models---including data-adaptive machine learning methods that maximally leverage information from baseline covariates, and enjoys robustness against misspecification of working outcome models. Under our conformal causal inference framework, we develop efficient computation algorithms to construct prediction intervals for treatment effects at both the cluster and individual levels, and further extend to address inferential targets defined based on pre-specified covariate subgroups. Finally, we demonstrate the properties of our methods via simulations and a real data application based on a completed cluster randomized trial for treating chronic pain.}
\end{abstract}

\begin{keywords}
Cluster randomized experiments; conformal prediction; machine learning; individual-level treatment effect; cluster-level treatment effect; finite-sample coverage.
\end{keywords}


\maketitle


%

\section{Introduction}

Cluster randomized trials (CRTs, \citealp{murray1998design}) represent a type of randomized experiments where the treatment is assigned at the cluster level, and are prevalent in biomedical intervention studies.  
CRTs have also become increasingly popular in pragmatic clinical trials to evaluate the treatment effect in routine practice conditions. Compared to individually randomized trials, CRTs can reduce the risk of treatment contamination and facilitate the evaluation of interventions that are naturally implemented at the cluster level.

\subsection{Model-robust inference for CRTs}

{For analyzing CRTs, there is a burgeoning literature in pursuing model-robust inference in lieu of conventional model-based inference, aiming to provide stronger statistical guarantees for targeting causal estimands 
(e.g., \citealp{balzer2015adaptive,balzer2016targeted,su2021model,balzer2023two, wang2023CRT}); also see \cite{benitez2021comparative} for a recent review}.
Briefly, \emph{model robustness} means that the validity of inference does not rely on the correct specification of the working model. 
In CRTs, the true outcome may be a nonlinear function of covariates, follow a non-normal distribution, and have a complex covariate-dependent correlation structure. Due to these complexities, it is almost impossible to always pre-specify a working model that perfectly captures the true data-generating process, thereby rendering model-robust inference attractive for improving the rigor in analyzing CRTs.


Model-assisted regression and permutation tests represent two approaches for model-robust inference in CRTs, but are subject to potential limitations. Model-assisted regression requires the asymptotic approximations assuming the number of clusters (the number of independent units) approaches infinity {\citep{balzer2015adaptive, balzer2016targeted, su2021model, wang2021mixed, balzer2023two, wang2023CRT}}. 
{Often in practice, the number of randomized clusters may be too small for the asymptotic results to apply. Although ad-hoc finite-sample corrections have been recommended for small CRTs (such as the degrees-of-freedom correction in \citet{hayes2017cluster}), they have been mostly studied under model-based methods.} 
Alternatively, the permutation test can achieve model-robust exact inference under the sharp null \citep{small2008randomization, ding2018rank} and lead to conservative inference under the weak null \citep{wu2021randomization}. 
However, the permutation-based confidence interval can require non-trivial computation \citep{rabideau2021randomization}. 

\subsection{A concise introduction to conformal prediction}


{In this article, we further develop conformal prediction, originally conceptualized by \cite{vovk2005algorithmic}, to perform causal inference in CRTs. This approach achieves finite-sample validity and offers a complementary framework for model-robust inference in CRTs. We first provide a concise introduction to conformal prediction; complete details can be found in \cite{tibshirani2019conformal, barber2021predictive} among others.} 

Conformal prediction aims to construct a covariate-based interval (using an arbitrary working model) such that the outcome resides in this interval with a desired probability, e.g., 90\%. 
{Concretely, let $(Y_i,X_i)$, $i=1,\dots, n$ be independent and identical draws from an unknown distribution $\mathcal{P}$, where $Y_i$ is the outcome and $X_i$ represents a vector of covariates. Conformal prediction outputs an interval $\widehat{C}(X)$ so that $P\{Y_{\textup{new}} \in \widehat{C}(X_{\textup{new}})\} \ge 0.9$, where $(Y_{\textup{new}}, X_{\textup{new}})$ is an independent new draw from $\mathcal{P}$. Therefore, the function $\widehat{C}(X)$ gives a prediction interval for the outcome based on any new observation of covariates, ensuring that the true outcome is covered by the interval with probability at least 90\%.
Several approaches are available to construct $\widehat{C}(X)$, and their general ideas all rely on a simple fact; that is, for independent and identically distributed variables $U_1, \dots, U_n, U_{\textup{new}}$, the rank of $U_{\textup{new}}$ is uniformly distributed over $\{1,\dots, n+1\}$. Therefore, denoting $\widehat{F}$ as the empirical distribution of $U_1, \dots, U_n, U_{\textup{new}}$ and $\widehat{q}_{0.9}$ as the $0.9$-quantile of $\widehat{F}$, we have $P(U_{\textup{new}} \le \widehat{q}_{0.9}) \ge 0.9$, which leads to a naive prediction interval $(-\infty, \widehat{q}_{0.9})$ for $U_{\textup{new}}$. However, $\widehat{q}_{0.9}$ should not depend on the prediction data point $U_{\textup{new}}$. A standard approach to release this dependency is replacing $U_{\textup{new}}$ by a point mass at $+\infty$ in $\widehat{F}$, without compromising the $90\%$ coverage probability guarantee.  Under the prediction setting, $U_i$ is set to be $|Y_i-\widehat{f}(X_i)|$, where $\widehat{f}$ is an arbitrary prediction model. Then the interval $\widehat{C}(X) = \{y: |y-\widehat{f}(X)| \le \widehat{q}_{0.9}\}$ satisfies the desired coverage probability since $P\{Y_{\textup{new}} \in \widehat{C}(X_{\textup{new}})\} = P(U_{\textup{new}} \le \widehat{q}_{0.9}) \ge 0.9$. Finally, to ensure validity, data splitting can be used to separate the fitting of $\widehat{f}$ and the construction of $\widehat{C}(X)$.} 

Remarkably, this 90\% coverage probability holds even in finite samples, allowing for any arbitrary prediction model $\widehat{f}$.
Given that its model robustness is independent of any asymptotic regimes, conformal prediction addresses the potential challenge with a limited number of independent units. 
Although conformal prediction has recently been extended to perform causal inference for non-clustered data \citep{lei2021conformal,yang2022doubly, qiu2022prediction,yin2022conformal, alaa2023conformal,jin2023sensitivity}, its applications to clustered data has not been previously investigated and poses new challenges. First, the collection of all individual observations in a CRT are marginally non-exchangeable due to the intra-cluster correlation. 
As a result, the existing theory of conformal causal inference cannot be directly applied without essential modifications. Second, the multilevel data structure of CRTs implies more than one definition of the treatment effect parameter
{\citep{imai2009essential, hayes2017cluster, benitez2021comparative, kahan2023estimands,kahan2024demystifying}}, and conformal causal inference requires adaptations to target each treatment effect parameter. 
Third, existing results for conformal causal inference primarily focused on the interpretation over the entire population, but conformal causal inference about subpopulations can be of substantial interest in CRTs \citep{wang2024sample} and remains under-developed. 

\subsection{Our contribution}

Our principal contribution is to develop a conformal causal inference approach tailored for CRTs. 
Specifically, given any working model, we provide algorithms to construct prediction intervals for both the cluster-level and individual-level treatment effect and prove their finite-sample model robustness under minimal assumptions. 
The cluster-level treatment effect assesses the impact on a new cluster as a whole, while the individual-level treatment effect focuses on the impact on any given individual within a new cluster: both treatment effects can be relevant depending on the scientific question \citep{kahan2023estimands}. 
{Importantly, our prediction interval for treatment effect is different from (and not meant to replace) the traditional confidence interval for the average treatment effect. The former quantifies the uncertainty in the distribution of treatment effects over a target population, whereas the latter quantifies the uncertainty in estimating the population average treatment effect as a fixed parameter. 
The proposed conformal causal inference approach allows us to bracket treatment effects accounting for the individual- and/or cluster-specific characteristics, and represents a useful tool for individual- and cluster-based decision-making.}
Furthermore, we extend our methods to perform subgroup analysis, that is, achieving the desired coverage probability conditioning on covariate subgroups. Throughout, we integrate machine learning algorithms as a base learner for the outcome given covariates, in order to narrow the prediction intervals for treatment effects without compromising validity in finite samples.


\section{Notation, assumptions, and targets of inference}\label{sec:setup}
\subsection{Potential outcomes framework}
We consider a CRT with $m$ clusters, and each cluster $i, \ i=1,\dots, m$, includes $N_i$ individuals. At the cluster level, we define $A_i$ as a binary treatment indicator ($A_i=1$ for treatment and $A_i=0$ for control) and $R_i$ as a vector of cluster-level covariates, such as the geographical location of the cluster. For each individual $j=1,\dots, N_i$ in cluster $i$, we denote $Y_{ij}$ as their observed outcome, $X_{ij}$ as a vector of individual-level baseline covariates. The observed data for each cluster are $O_i = \{(Y_{ij}, A_i, X_{ij}, R_i):j=1,\dots, N_i\}$. We adopt the potential outcomes framework and define $Y_{ij}(a)$ as the potential outcome for individual $j$ of cluster $i$ had cluster $i$ been assigned $A_i=a$ for $a= 0,1$. We assume causal consistency such that $Y_{ij} = A_iY_{ij}(1)+(1-A_i)Y_{ij}(0)$. Denoting the complete data vector for each cluster as $W_i = \{(Y_{ij}(1), Y_{ij}(0), X_{ij}, R_i):j=1,\dots, N_i\}$, we make the following structural assumptions on $(W_1,\dots, W_m)$ and the assignment vector $(A_1,\dots, A_m)$.

\begin{assumption}[\emph{Random sampling of clusters}]\label{asmp1}
    Each $W_i$ is an independent, identically distributed sample from an unknown distribution $\mathcal{P}^W$ on $W = \{(Y_{\bullet,j}(1), Y_{\bullet,j}(0), X_{\bullet,j}, R):j=1,\dots, N\}$, where the subscript `${\bullet,j}$' represents the $j$-th individual in $W$.
\end{assumption}

\begin{assumption}[\emph{Cluster randomization}]\label{asmp2}
    Each  $A_i$ is an independent, identically distributed sample from a Bernoulli distribution $\mathcal{P}^A$ with $P(A = 1)  = \pi$ for a known constant $\pi \in (0,1)$. Furthermore, $\mathcal{P}^A$ is independent of $\mathcal{P}^W$.
\end{assumption}

\begin{assumption}[\emph{Within-cluster exchangeability}]\label{asmp3}
For any permutation map $\sigma$ on the index set $\{1,\dots, N\}$, the vector $\{W_{\bullet,1}, \dots, W_{\bullet,N}\}$ has the same distribution as $\{W_{\bullet,\sigma(1)}, \dots, W_{\bullet,\sigma(N)}\}$ conditioning on $N$, where $W_{\bullet,j} = (Y_{\bullet,j}(1), Y_{\bullet,j}(0), X_{\bullet,j}, R)$.
\end{assumption}

Assumption \ref{asmp1} 
is standard for causal inference in CRTs under a sampling-based framework \citep{wang2023CRT}. 
Assumption \ref{asmp2} holds by design. 
{Assumption \ref{asmp3} states that, given the cluster size, the joint distribution of the complete data vector in a cluster is invariant to permutation of index $j$. 
This stronger assumption is only required for inferring the individual-level treatment effect (Section \ref{sec:individual-te}) but not the cluster-level treatment effect (Section \ref{sec:cluster-te}). For additional illustration, we provide in the Supplementary Material an example class of data-generating processes that imply within-cluster exchangeability. It is clear from that example that the conditional correlation of $Y_{\bullet,j}$ and $Y_{\bullet,j'}$ is allowed to flexibly depend on $X_{\bullet,j}$ and $X_{\bullet,j'}$ through a shared function across $(j,j')$ pairs. However, Assumption \ref{asmp3} still puts stronger constraints on the data-generating distribution, compared to assumptions invoked for asymptotic model-robust methods that allow for arbitrary intracluster correlation structures; see, for example, \citet{balzer2016targeted, benitez2021comparative} and \citet{wang2023CRT}.}

\subsection{Conventional targets of inference in CRTs}
{Conventionally, statistical inference for CRTs focuses on average treatment effects defined by various summary functions \citep{su2021model, benitez2021comparative}.} For example, when the interest lies in the cluster-average treatment effect $\Delta_C=E[\overline{Y}(1)-\overline{Y}(0)]$ with $\overline{Y}(a) = \frac{1}{N} \sum_{j=1}^{N} Y_{\bullet,j}(a)$ {being the within-cluster average of potential outcomes}, one important inferential target is to construct a confidence interval $\widehat{\textup{CI}}_m$ satisfying $P\left(\Delta_C \in \widehat{\textup{CI}}_m\right) = 1-\alpha$ with $\alpha$ being a pre-specified type I error rate. Ideally, the construction of $\widehat{\textup{CI}}_m$ should further leverage covariates to reduce the type II error rate.
Despite its simple form, this objective may be difficult without parametric model assumptions. 
{To relax the dependency on parametric assumptions, asymptotic theory (corresponding to methods discussed in Section 1.1) has been developed for a weaker objective, $ \lim_{m\rightarrow \infty} P\left(\Delta_C \in \widehat{\textup{CI}}_m\right) = 1-\alpha$, which states that the target coverage probability is model-robust as the number of clusters approaches infinity.}

\subsection{Targets of conformal inference in CRTs}
For finite-sample inference that is robust to working model misspecification, we propose the following alternative inferential targets. 
Given any $\alpha \in (0,1)$, we aim to construct a conformal interval $\widehat{\textup{C}}_C(\overline{B}) \subset \mathbb{R}$ for the cluster-level treatment effect such that
\begin{equation}\label{eq:marginal-conformal-goal-C}
    P\bigg\{\overline{Y}(1) - \overline{Y}(0) \in \widehat{\textup{C}}_C(\overline{B})\bigg\} \ge 1-\alpha,
\end{equation}
where 
$\overline{B} =(\overline{X}, R, N)$ is the cluster-level covariate information, with  $\overline{X} = \frac{1}{N} \sum_{j=1}^{N_i} X_{\bullet,j}$ being the {within-cluster average of covariates}. Equation~\eqref{eq:marginal-conformal-goal-C} has the following interpretation: for any new cluster sampled from $\mathcal{P}^{W}$, its cluster-level treatment effect, $\overline{Y}(1) - \overline{Y}(0)$, is guaranteed to be contained in $\widehat{\textup{C}}_C(\overline{B})$ with probability at least $1-\alpha$, without requiring asymptotic approximations. 
Therefore, $\widehat{\textup{C}}_C(\overline{B})$ naturally quantifies the cluster-level treatment effect by characterizing the most likely range of $\overline{Y}(1) - \overline{Y}(0)$.
Distinct from the confidence interval $\widehat{\textup{CI}}_m$ that targets a fixed parameter, $\widehat{\textup{C}}_C(\overline{B})$ is covariate-based and targets the random variable $\overline{Y}(1) - \overline{Y}(0)$. 
Then, $\widehat{\textup{C}}_C(\overline{B})$ essentially provides the $(\alpha/2, 1-\alpha/2)$-quantile for the distribution of $\overline{Y}(1) - \overline{Y}(0)$.  
Of note, while the definition in \eqref{eq:marginal-conformal-goal-C} is directly motivated by the cluster-average estimand, the same framework can be directly applied to other cluster summaries such as cluster totals {\citep{su2021model, benitez2021comparative}}.

Our target of inference also extends to the construction of a conformal interval $\widehat{\textup{C}}_I(B)$ for the individual-level treatment effect such that 
\begin{equation}\label{eq:marginal-conformal-goal-I}
    P\bigg\{Y(1) - Y(0) \in \widehat{\textup{C}}_I(B)\bigg\} \ge 1-\alpha.
\end{equation}
Here, $B = (X, R, N)$ is the individual-level covariate data, and $\{Y(1), Y(0), X\}$ can represent any individual $\{Y_{\bullet,j}(1), Y_{\bullet,j}(0), X_{\bullet,j}\}$ {from a new cluster} since they are identically distributed given $N$ by Assumption \ref{asmp3}.
The interpretation of Equation~\eqref{eq:marginal-conformal-goal-I} is similar to Equation~\eqref{eq:marginal-conformal-goal-C}:  for any individual in a new cluster independently sampled from from $\mathcal{P}^{W}$, the individual-level treatment effect $Y(1)-Y(0)$ is contained in $\widehat{\textup{C}}_I(B)$ with probability at least $1-\alpha$. 
Similarly, $\widehat{\textup{C}}_I(B)$ characterizes the $(\alpha/2, 1-\alpha/2)$-quantile of $Y(1)-Y(0)$ and describes a personalized range. 
In contrast to the cluster-level treatment effect, which only pertains to independent cluster-level data, accomplishing the objectives in Equation~\eqref {eq:marginal-conformal-goal-I} is more challenging due to the unknown intracluster correlations between observations inherent in CRTs.

Beyond \eqref{eq:marginal-conformal-goal-C} and \eqref{eq:marginal-conformal-goal-I}, we further study conformal causal inference on pre-defined covariate subgroups. Specifically, let $\Omega_C$ and $\Omega_I$ be arbitrary sets with positive measures in the support of $\overline{B}$ and $B$, respectively. 
For example, if $X_{1}$ represents age, we can define $\Omega_C = \{\overline{B}: \overline{X}_1 \ge 70\}$ and $\Omega_I = \{B: X_1 \ge 70\}$ to focus on the treatment effects on a specific age group.
Given  $\Omega_C$ and  $\Omega_I$, we aim to construct conformal intervals $\widetilde{\textup{C}}_C(\overline{B})$ and $\widetilde{\textup{C}}_I(B)$ such that 

\begin{align}
P\bigg\{\overline{Y}(1) - \overline{Y}(0) \in \widetilde{\textup{C}}_C(\overline{B})\bigg|  \overline{B} \in \Omega_C\bigg\} &\ge 1-\alpha,\label{eq: local-conformal-goal-C}\\
     P\bigg\{Y(1) - Y(0) \in \widetilde{\textup{C}}_I(B)\bigg|  B\in \Omega_I\bigg\} & \ge 1-\alpha.\label{eq: local-conformal-goal-I}
\end{align}
Equations~\eqref{eq: local-conformal-goal-C} and \eqref{eq: local-conformal-goal-I} characterize ``local coverage'' that is distinct from ``marginal coverage'' characterized by Equations~\eqref{eq:marginal-conformal-goal-C} and \eqref{eq:marginal-conformal-goal-I}. 
With local coverage, the resulting conformal intervals provide more targeted information for specific sub-populations of interest (rather than the entire study population) and carry a finer resolution (in the same spirit of subgroup analyses). 
The marginal coverage definition is then viewed as a special case of local coverage by setting $\Omega_C$ and $\Omega_I$ to be the entire covariate support.


Before moving to the technical development, we introduce some additional notation. 
Let $\overline{Y} = A\overline{Y}(1) + (1-A)\overline{Y}(0)$, $I\{G\}$ denote the indicator function of event $G$, $||\cdot||_2$ be the $\ell_2$ norm,
$\delta_s$ be a point mass at $s$ in the Euclidean space, and $\delta_{+\infty}$ be a point mass at infinity.

\section{Conformal causal inference for cluster-level treatment effects}\label{sec:cluster-te}
\subsection{Inference for an observed cluster}\label{sec:cluster-te-obs}
For inferring cluster-level treatment effects, we aim to construct a conformal interval $\widetilde{\textup{C}}_C$ given a test point $\overline{B}_{\textup{test}}$, i.e., the cluster-aggregate covariates of a new cluster sampled from the target population. We first consider a basic scenario, where we observe the complete information $\overline{O}_{\textup{test}}=(\overline{Y}_{\textup{test}}, A_{\textup{test}}, \overline{B}_{\textup{test}})$ for the test point. In other words, this test point corresponds to an ``observed cluster'', i.e., its current intervention $A_{\textup{test}}$ (likely not randomized, often $A_{\textup{test}}=0$ and not included in the CRT) and current average outcome $\overline{Y}(A_{\textup{test}})$ are also recorded. Such additional data beyond $\overline{B}_{\textup{test}}$ will simplify inference since one of the two potential outcomes $\{\overline{Y}_{\textup{test}}(1), \overline{Y}_{\textup{test}}(0)\}$ is directly observed, and the conformal causal inference only requires constructing the conformal interval for the unobserved average potential outcome. 

Under this basic scenario, Algorithm~\ref{alg:1} outlines the steps to compute the conformal interval $\widetilde{\textup{C}}_C(\overline{O}_{\textup{test}})$.
In the input phase, the prediction model $f_a, a \in \{0,1\}$ can be an arbitrary map from covariates to the outcome, e.g., a linear model, or random forest \citep{breiman2001random}. Different choices of $f_a$ will not impact the coverage validity, but can result in conformal intervals with different lengths and thus affect precision. 
In addition, we need to specify the covariate subgroup of interest, $\Omega_C$, and a level $\alpha$, e.g., $\alpha=0.1$. 

\begin{algorithm}[htb]
\setstretch{1}
\caption{\label{alg:1} Computing the conformal interval $\widetilde{\textup{C}}_C(\overline{O})$ for cluster-level treatment effects.}
\textbf{Input:} Cluster-level data $\{(\overline{Y}_i, A_i, \overline{B}_i):i=1,\dots, m\}$,  a test point $\overline{O}_{\textup{test}}=(\overline{Y}_{\textup{test}}, A_{\textup{test}}, \overline{B}_{\textup{test}})$, an arbitrary prediction model $f_a(\overline{B})$ for $\overline{Y}(a)$, $a\in \{0,1\},$ a covariate subgroup of interest $\Omega_C$, and level $\alpha$.

\vspace{5pt}
\textbf{Step 1} Constructing the conformal interval $\widetilde{\textup{C}}_{C,a}(\overline{B})$ for $\overline{Y}(a)$.

For $a = 0,1$,

\hspace{5pt} 1. Randomly split the arm-$a$ covariate subgroup data $\{(\overline{Y}_i, \overline{B}_i): i=1,\dots, m, A_i =a, \overline{B}_i \in \Omega_C\}$ into a training fold $\mathcal{O}_{tr}(a)$ and a calibration fold $\mathcal{O}_{ca}(a)$ with index set $\mathcal{I}_{ca}(a)$.

\hspace{5pt} 2. Train the prediction model $f_a(\overline{B})$ using the training fold $\mathcal{O}_{tr}(a)$, and obtain the estimated model $\widehat{f}_a(\overline{B})$.

\hspace{5pt} 3. For each $i \in \mathcal{I}_{ca}(a)$, compute the non-conformity score $s(\overline{B}_i, \overline{Y}_i) = |\overline{Y}_i - \widehat{f}_a(\overline{B}_i)|$.



\hspace{5pt} 4. Compute the $1-\alpha$ quantile $\widehat{q}_{1-\alpha}(a)$ of the distribution $$\widehat{F}= \frac{1}{|\mathcal{I}_{ca}(a)|+1}\left\{\sum_{i \in \mathcal{I}_{ca}(a)}   \delta_{s(\overline{B}_i, \overline{Y}_i)} + \delta_{+\infty}\right\}.$$

\hspace{5pt} 5. Obtain $\widetilde{\textup{C}}_{C,a}(\overline{B}) = \{y \in \mathbb{R}: |y-\widehat{f}_a(\overline{B})| \le \widehat{q}_{1-\alpha}(a)\}$.

\vspace{5pt}

\textbf{Step 2} Constructing the conformal interval $\widetilde{\textup{C}}_C(\overline{O})$ for $\overline{Y}(1)-\overline{Y}(0)$.


\hspace{10pt}  If $A_{\textup{test}} = 1,$ then set $\widetilde{\textup{C}}_C(\overline{O}_{\textup{test}}) = \overline{Y}_{\textup{test}} - \widetilde{\textup{C}}_{C,0}(\overline{B}_{\textup{test}})$;

\hspace{10pt}  if $A_{\textup{test}} = 0,$ then  set $\widetilde{\textup{C}}_C(\overline{O}_{\textup{test}}) = \widetilde{\textup{C}}_{C,1}(\overline{B}_{\textup{test}}) - \overline{Y}_{\textup{test}}$.

\vspace{5pt}
\textbf{Output:} $\widetilde{\textup{C}}_C(\overline{O}_{\textup{test}})$.
\end{algorithm}



Given the input, we take two steps to construct the conformal interval. 
In the first step, we apply the split conformal prediction \citep{papadopoulos2002inductive} to construct a conformal interval for one potential outcome $\overline{Y}(a)$, based on arm-$a$ data in the covariate subgroup characterized by $\Omega_C$. In split conformal prediction, the data are randomly partitioned into two parts, one used to train the prediction model $f_a,$ $a\in \{0,1\},$ (Steps 1.1-1.2) and the other used to construct the conformal interval (Steps 1.3-1.5). 
Given the model fit $\widehat{f}_a$, Step 1.3 computes the non-conformity score, which is the absolute value of prediction error on calibration data. Intuitively,  a large non-conformity score indicates an abnormal data point (i.e., it does not ``conform'') with respect to $\widehat{f}_a$. 
{In Step 1.4, we construct an empirical distribution of the non-conformity score, denoted as $\widehat{F}$, among the validation samples and the test sample; since the test sample may not be in group $a$ and hence unobserved, we replace its point mass by a point mass at infinity. Following the same idea of conformal prediction described in Section 1.2, we have $P\{s(\overline{B}_{\textup{test}}, \overline{Y}_{\textup{test}}(a)) \le \widehat{q}_{1-\alpha}(a)\} \ge 1-\alpha$, where $\widehat{q}_{1-\alpha}(a)$ is the $(1-\alpha)$-quantile of $\widehat{F}$. This leads to our conformal interval for $\overline{Y}_{\textup{test}}(a)$ in Step 1.5.}
We then proceed to Step 2 and output $\widetilde{\textup{C}}_{C}(\overline{O}_{\textup{test}}) = (-1)^{A_{\textup{test}} + 1} \{\overline{Y}_{\textup{test}} - \widetilde{\textup{C}}_{C,1-A_{\textup{test}}}(\overline{B}_{\textup{test}})\}$; that is, the final conformal interval is a contrast between the observed potential outcome and the interval constructed for the unobserved potential outcome.

Theorem~\ref{prop:1} proves that $\widetilde{\textup{C}}_{C}(\overline{O}_{\textup{test}})$ achieves finite-sample coverage guarantee for the cluster-level treatment effect, without requiring Assumption~\ref{asmp3} on within-cluster exchangeability. 

\begin{theorem}\label{prop:1}
 Under Assumptions \ref{asmp1}-\ref{asmp2} and assuming that $\overline{O}_{\textup{test}} = (\overline{Y}_{\textup{test}}, A_{\textup{test}}, \overline{B}_{\textup{test}})$ is an independent sample from the distribution $\mathcal{P}^{\overline{O}}$ induced by $\mathcal{P}^W\times \widetilde{\mathcal{P}}^{A|W}$ and $\overline{Y}_{\textup{test}} = A_{\textup{test}}\overline{Y}_{\textup{test}}(1)+(1-A_{\textup{test}})\overline{Y}_{\textup{test}}(0)$, where $\widetilde{\mathcal{P}}^{A|W}$ is an arbitrary unknown distribution for $A_{\textup{test}}$.  Then, the conformal interval $\widetilde{\textup{C}}_{C}(\overline{O}_{\textup{test}})$ output by Algorithm~\ref{alg:1} satisfies
\begin{equation}\label{eq: prop1}
    P\bigg\{\overline{Y}_{\textup{test}}(1) - \overline{Y}_{\textup{test}}(0) \in \widetilde{\textup{C}}_C(\overline{O}_{\textup{test}})\bigg|\overline{B}_{\textup{test}} \in \Omega_C\bigg\} \ge 1-2\alpha
\end{equation}
for any set $\Omega_C$ in the support of $\overline{B}_{\textup{test}}$ with a positive measure. 

\end{theorem}

Theorem~\ref{prop:1} is a direct generalization of the conformal prediction theory to conformal causal inference on covariate subgroups. 
Since we observe $\overline{Y}_{\textup{test}}$ and $A_{\textup{test}}$, the valid coverage for the treatment effect is straightforward once we achieve valid coverage for the potential outcome. However, since we need to account for arbitrary distribution shift on $A_{\textup{test}}$ from $\mathcal{P}^A$ to $\widetilde{\mathcal{P}}^{A|W}$, the non-coverage rate $\alpha$ is doubled in Equation~\eqref{eq: prop1} by applying the union bound. If $\widetilde{P}^{A|W}$ is independent of $W$ (e.g., $P(A_{\textup{test}}=1)=0$ when studying a new intervention or $A_{\textup{test}}$ is randomized), then the coverage probability achieves $1-\alpha$ in Equation~\eqref{eq: prop1}.
{For local coverage, we consider a simple yet effective approach that only clusters within $\overline{B}_i \in \Omega_C$ are included for analysis, resembling the conventional idea for subgroup analysis.} This can change the covariate distribution but not the outcome distribution given covariates, thereby not affecting the coverage result. Beyond this approach, an alternative method that achieves uniform local coverage using the full sample is discussed in \cite{hore2023conformal}.

{In Algorithm 1, both the test point and the prediction model are based on cluster-level data. If all individual-level data are available as the test data, we could alternatively fit a prediction model $f_a^*(B_{ij})$ for $Y_{ij}(a)$ with individual-level data, and construct the non-conformity score as $|\overline{Y}_i - \frac{1}{N_{i}}\sum_{j=1}^{N_i}\widehat{f}_a^*(B_{ij})|$. With this change, one may increase the sample size for model training, thereby improving the numerical stability and the quality of the non-conformity score (See the Supplementary Material for a numerical example).}

In practice, how to choose $\alpha$ and the size of the training fold depends on the number of clusters $m$. To obtain a non-trivial conformal interval such that $\widetilde{\textup{C}}_{C}(\overline{O}_{\textup{test}}) \ne \mathbb{R}$, the definition of $\widehat{q}_{1-\alpha}(a)$ requires that $\alpha \ge |\mathcal{I}_{ca}(a)|^{-1}$. For example, if we set $\alpha=0.1$, then the calibration fold for each arm should contain at least 10 clusters, and the rest $m-20$ clusters can be used as the training fold. For stability of model fitting, we recommend using at least 20 clusters for training, while fewer clusters are also acceptable if parsimonious models such as linear regression are used. When the number of clusters is small, e.g., $m=20$, Algorithm~\ref{alg:1} based on split conformal prediction cannot construct non-trivial conformal intervals with $90\%$ coverage probability. However, this goal is achievable with full conformal prediction or Jackknife+, both of which demand substantially heavier computation \citep{barber2021predictive}.

\subsection{Inference based on cluster-level covariates}
\label{sec:cluster-te-new}

When the target of inference concerns a new cluster that only has covariate information $\overline{B}_{\textup{test}}$ (e.g., a new cluster having not taken either treatment or control studied in the current CRT), a direct approach based on Algorithm~\ref{alg:1} is to combine $\widetilde{\textup{C}}_{C,1}$ and $\widetilde{\textup{C}}_{C,0}$, for which Corollary~\ref{corollary1} characterizes its local coverage property. 

\begin{corollary}\label{corollary1}
    Under Assumptions \ref{asmp1}-\ref{asmp2} and assuming that $(\overline{Y}_{\textup{test}}(1),\overline{Y}_{\textup{test}}(0),\overline{B}_{\textup{test}})$ is an independent sample from the distribution $\mathcal{P}^W$. Then, $\widetilde{\textup{C}}_{C,1}(\overline{B}_{\textup{test}})$  and $\widetilde{\textup{C}}_{C,0}(\overline{B}_{\textup{test}})$ output by Algorithm~\ref{alg:1} satisfy
\begin{equation}\label{eq: cor1}
    P\bigg\{\overline{Y}_{\textup{test}}(1) - \overline{Y}_{\textup{test}}(0) \in \widetilde{\textup{C}}_{C,1}(\overline{B}_{\textup{test}}) - \widetilde{\textup{C}}_{C,0}(\overline{B}_{\textup{test}})\bigg|\overline{B}_{\textup{test}} \in \Omega_C\bigg\} \ge {1-2\alpha}
\end{equation}
for any set $\Omega_C$ in the support of $\overline{B}_{\textup{test}}$ with a positive measure. 
\end{corollary}

Compared to Equation~\eqref{eq: prop1}, Equation~\eqref{eq: cor1} provides {the same coverage probability $1-2\alpha$}, but the length of the conformal interval can be approximately doubled since we no longer observe $\overline{Y}_{\textup{test}}$ and $A_{\textup{test}}$. 
As a result, the conformal interval based on $\overline{B}_{\textup{test}}$ tends to be less informative than $\widetilde{\textup{C}}_C(\overline{O}_{\textup{test}})$, a natural result of less observed information. 

Beyond this direct approach,  \cite{lei2021conformal} provided a more flexible nested approach to construct $\widetilde{\textup{C}}_{C}(\overline{B}_{\textup{test}})$. This method first computes $\overline{C}_i$, the  $(1-\alpha)$-conformal interval for $\overline{Y}(1)-\overline{Y}(0)$, using  $\widetilde{\textup{C}}_{C,a}$ from Algorithm~\ref{alg:1}, and then run split conformal prediction again on $(\overline{C}_i, \overline{B}_i)$ with level $\gamma$ and prediction models $(m^L,m^R)$. Since we run split conformal prediction twice, the resulting conformal interval $\widetilde{\textup{C}}_{C}(\overline{B}_{\textup{test}})$ has non-coverage probability up to $\alpha+\gamma$. For completeness, we provide the detailed algorithm and its theoretical guarantee in the Supplementary Material. 
Finally, we remark that the nested approach becomes similar to our direct approach by setting $\gamma=\alpha$. 

\section{Conformal causal inference for individual-level treatment effects}\label{sec:individual-te}

So far, we have extended conformal causal inference for the cluster-level treatment effects by treating clusters as independent units. However, for individual-level treatment effects, the existing theory for conformal inference cannot be directly applied since the individual-level data are correlated within clusters. 
\citet{dunn2023distribution} and \citet{lee2023distribution} developed results for conformal inference with hierarchical data, focusing on the setting where individuals in the same cluster are conditionally independent. Here, we consider a more general setting where within-cluster correlation is allowed under an exchangeability setup characterized by Assumption \ref{asmp3}, which is better aligned with the characteristics of CRT data.




\subsection{Inference for an observed individual}\label{sec:individual-te-obs}

We first consider a basic setting, where the test point has the complete information $O_{\textup{test}} = (Y_{\textup{test}}, A_{\textup{test}}, B_{\textup{test}})$. This setting corresponds to an observed individual in an observed cluster of interest, whose current treatment $A_{\textup{test}}$ may not be randomized. Similar to our development in Section \ref{sec:cluster-te-obs}, we only need to construct the conformal interval for the unobserved potential outcome, likely leading to narrower and more informative conformal intervals.

In Algorithm~\ref{alg:3}, we show how to construct the conformal interval $\widetilde{\textup{C}}_I(O)$ for $Y(1)-Y(0)$. Compared to Algorithm~\ref{alg:1}, the major difference is in Step 1.4: the new empirical distribution function $\widehat{F}$ only involves individuals meeting the criteria $B_{ij} \in \Omega_I$, and each individual is further weighted by $\left(\sum_{j=1}^{N_i} I\{B_{ij} \in \Omega_I\}\right)^{-1}$; all other steps remain similar but now apply to the individual data (rather than cluster aggregate). 
{In this new $\widehat{F}$, we first construct an empirical distribution within each cluster and then construct the empirical distribution connecting validation clusters and the test cluster. Again, since the outcomes in the test cluster may not be observed, we replace its point mass with a point mass at infinity. 
This new $\widehat{F}$ captures two levels of exchangeability (within and across clusters), which extends conformal prediction of single-level data (Section 1.2 and Algorithm 1). With this change, we can establish that $P\{s({B}_{\textup{test}}, {Y}_{\textup{test}}(a)) \le \widehat{q}_{1-\alpha}(a)\} \ge 1-\alpha$ and obtain the desired conformal conformal intervals for ${Y}_{\textup{test}}(a)$; technical details are provided in the Supplementary Material.}
In addition, since we target inference on specific covariate subgroups characterized by $\Omega_I$ (which can be the entire support), we prove a Lemma in the Supplementary Material to show that Assumptions \ref{asmp3} holds when restricting to subgroup $\Omega_I$. This is a critical step to preserve the exchangeable structure within clusters and develop our theoretical results. 
The theoretical guarantee of the resulting conformal interval is formally stated in Theorem~\ref{thm1}.



\begin{algorithm}[htb]
\caption{\label{alg:3} Computing the conformal interval $\widetilde{\textup{C}}_I({O})$ for individual-level treatment effects.}
\textbf{Input:} Individual-level data $\{(Y_{ij}, A_i, B_{ij}):i=1,\dots,m; j = 1,\dots, N_i\}$,  a test point $O_{\textup{test}}=(Y_{\textup{test}}, A_{\textup{test}}, B_{\textup{test}})$, a prediction model $f_a(B)$ for $Y(a)$, $a\in \{0,1\},$ a covariate subgroup of interest $\Omega_I$, and level $\alpha$.

\vspace{5pt}
\textbf{Step 1} (Constructing the conformal interval $\widetilde{\textup{C}}_{I,a}(B)$ for $Y(a)$.)

For $a = 0,1$,

\hspace{5pt} 1. Randomly split the arm-$a$ covariate subgroup data $\{(Y_{ij}, A_i, B_{ij}):i=1,\dots,m; j = 1,\dots, N_i; A_i=a; B_{ij} \in \Omega_I\}$ into a training fold $\mathcal{O}_{tr}(a)$ and a calibration fold $\mathcal{O}_{ca}(a)$ with index set $\mathcal{I}_{ca}(a)$. {The split is at the cluster level, and individuals in the same cluster remain in the same fold.}

\hspace{5pt} 2. Train the prediction model $f_a(B)$ using the training fold $\mathcal{O}_{tr}(a)$, and obtain the estimated model $\widehat{f}_a(B)$.

\hspace{5pt} 3. For each $(i,j) \in \mathcal{I}_{ca}(a)$, compute the non-conformity score $s(B_{ij}, Y_{ij}) = |Y_{ij} - \widehat{f}_a(B_{ij})|$.

\hspace{5pt} 4. Compute the $1-\alpha$ quantile $\widehat{q}_{1-\alpha}(a)$ of the distribution $$\widehat{F}= \frac{1}{|\mathcal{I}_{ca}(a)|+1}\left\{\sum_{i \in \mathcal{I}_{ca}(a)}   \frac{1}{\sum_{j=1}^{N_i} I\{B_{ij} \in \Omega_I\}} \sum_{j=1}^{N_i}I\{B_{ij} \in \Omega_I\}\delta_{s(B_{ij}, Y_{ij})} + \delta_{+\infty}\right\}.$$

\hspace{5pt} 5. Obtain $\widetilde{\textup{C}}_{I,a}(B) = \{y \in \mathbb{R}: |y-\widehat{f}_a(B)| \le \widehat{q}_{1-\alpha}(a)\}$.

\vspace{5pt}

\textbf{Step 2} (Constructing the conformal interval $\widetilde{\textup{C}}_I(O)$ for $Y(1)-Y(0)$.)

\hspace{10pt}  If $A_{\textup{test}} = 1,$ then set $\widetilde{\textup{C}}_I(O_{\textup{test}}) = Y_{\textup{test}} - \widetilde{\textup{C}}_{I,0}(B_{\textup{test}})$;

\hspace{10pt}  if $A_{\textup{test}} = 0,$ then  set $\widetilde{\textup{C}}_I(O_{\textup{test}}) = \widetilde{\textup{C}}_{I,1}(B_{\textup{test}}) - Y_{\textup{test}}$.

\vspace{5pt}
\textbf{Output:} $\widetilde{\textup{C}}_I(O_{\textup{test}})$.
\end{algorithm}

\vspace{-0.2in}

\begin{theorem}\label{thm1}
    Under Assumptions \ref{asmp1}-\ref{asmp3} and assuming that $O_{\textup{test}}=(Y_{\textup{test}}, A_{\textup{test}}, B_{\textup{test}})$ is an arbitrary individual in a new cluster independently sampled from $\mathcal{P}^W\times \widetilde{\mathcal{P}}^{A|W}$ with $Y_{\textup{test}} = A_{\textup{test}}Y_{\textup{test}}(1) + (1-A_{\textup{test}})Y_{\textup{test}}(0)$, where $\widetilde{\mathcal{P}}^{A|W}$ is an arbitrary unknown distribution for $A_{\textup{test}}$. Then, the $\widetilde{\textup{C}}_{I}(O_{\textup{test}})$ output by Algorithm 1 satisfies
\begin{equation}\label{eq: thm1}
    P\bigg\{Y_{\textup{test}}(1) - Y_{\textup{test}}(0) \in \widetilde{\textup{C}}_I(O_{\textup{test}})\bigg|B_{\textup{test}} \in \Omega_I\bigg\} \ge 1-2\alpha
\end{equation}
for any set $\Omega_I$ in the support of $B_{\textup{test}}$ with a positive measure. 
\end{theorem}

Theorem~\ref{thm1} is the counterpart of Theorem~\ref{prop:1} for individual-level treatment effects. Due to the distribution shift on $A_{\textup{test}}$, the coverage probability is $1-2\alpha$ instead of $1-\alpha$, while the latter can be achieved if $\widetilde{\mathcal{P}}^{A|W}$ is independent of $W$ (e.g., $A_{\textup{test}}\equiv0$). When each cluster only has one individual, Algorithm~\ref{alg:1} and Algorithm~\ref{alg:3} coincide, and their resulting coverage guarantees also become identical.


In terms of the length of conformal intervals, $\widetilde{\textup{C}}_C(\overline{B}_{\textup{test}})$ tends to be more informative than $\widetilde{\textup{C}}_I(B_{\textup{test}})$ given a sufficient number of clusters. This is because $\overline{Y}(1)-\overline{Y}(0)$ often has smaller variance than $Y(1)-Y(0)$, especially if $N_i$ is large. As a result, $\widetilde{\textup{C}}_C(\overline{B}_{\textup{test}})$  is more likely to exclude zero than $\widetilde{\textup{C}}_I(B_{\textup{test}})$ given the same treatment effect size.
In practice, choosing between the two types of inferential targets requires a case-by-case evaluation. Although the scientific question should drive the target of inference, from a statistical perspective, conformal inference for the cluster-level treatment effects with $\alpha=0.1$ is typically more informative and efficient when $m$ is large, e.g., $m\ge80$. Given a small to moderate number of clusters, conformal causal inference for individual-level treatment effects could be numerically more stable due to the increased sample size in the calibration fold to compute $\widehat{q}_{1-\alpha}(a)$.

\subsection{Inference based on individual-level covariates}
When the test point only contains individual-level covariates $B_{\textup{test}}$, we follow the same strategy as in Section \ref{sec:cluster-te-new} to construct conformal intervals. Corollary~\ref{cor2} characterizes the local coverage property of the direct approach as an application of Theorem~\ref{thm1}. 
\begin{corollary}\label{cor2}
Under Assumptions \ref{asmp1}-\ref{asmp3} and assuming that $(Y_{\textup{test}}(1), Y_{\textup{test}}(0), B_{\textup{test}})$ is an arbitrary individual from a new cluster independently sampled from $\mathcal{P}^W$. Then $\widetilde{\textup{C}}_{I,1}(B_{\textup{test}})$ and $\widetilde{\textup{C}}_{I,0}(B_{\textup{test}})$ output by Algorithm~\ref{alg:3} satisfy
\begin{equation}\label{eq: cor2}
    P\bigg\{Y_{\textup{test}}(1) - Y_{\textup{test}}(0) \in \widetilde{\textup{C}}_{I,1}(B_{\textup{test}})-\widetilde{\textup{C}}_{I,0}(B_{\textup{test}})\bigg|B_{\textup{test}} \in \Omega_I\bigg\} \ge 1-{2}\alpha
\end{equation}
for any set $\Omega_I$ in the support of $B_{\textup{test}}$ with a positive measure. 
\end{corollary}


In parallel to Corollary~\ref{corollary1}, Corollary~\ref{cor2} establishes the coverage guarantee on conformal intervals for the individual-level treatment effect based on covariates. In addition, the resulting conformal interval enjoys the same benefit of stability and small-sample compatibility as discussed in Section \ref{sec:individual-te-obs}. 
In addition to the direct approach, we provided the nested approach with $1-\alpha-\gamma$ coverage guarantee in the Supplementary Material.

\vspace{-0.2in}

\section{Simulations}\label{sec:simulation}
\subsection{Simulation design}
Through a simulation study, we demonstrate our finite-sample theoretical results for both cluster-level and individual-level treatment effects. We consider the combination of the following settings: CRTs with a large ($m=100$) versus small ($m=30$) number of clusters, and full-data analysis versus covariate subgroup analysis. For $i=1,\dots, m$, we independently generate the cluster size $N_i \sim \mathcal{U}([10,50])$ and two cluster-level covariates $R_{i1}|N_i \sim \mathcal{N}(N_i/10, 1)$, $R_{i2}|(N_i, R_{i1}) \sim \mathcal{B}\{(1+e^{-R_{i1}/2})^{-1}\}$, where $\mathcal{U}, \mathcal{N}, \mathcal{B}$ represent the uniform, normal, and Bernoulli distribution, respectively. 
For each individual $j = 1,\dots, N_i$, we independently generate covariates $X_{ij1}|(N_i, R_{i1}, R_{i2}) \sim \mathcal{B}(0.3+0.4R_{i2})$ and $X_{ij2} = (2I\{R_{i1}>0\}-1)\overline{X}_{i1} + \mathcal{N}(0,1)$, and potential outcomes $Y_{ij}(a)= aN_i/50+\sin(R_{i1}) (2R_{i2}-1) + |X_{ij1}X_{ij2}| + (1-a)\gamma_i + \varepsilon_{ij}$ for $a=0,1$
where $\gamma_i \sim \mathcal{N}(0,0.5^2)$ is the random intercept and $\varepsilon_{ij} \sim N(0,1)$ is the random noise, {leading to an adjusted intracluster correlation coefficient of $0.2$ under $a=0$}.  Then we independently generate the treatment indicator $A_i \sim \mathcal{B}(0.5)$, and obtain $Y_{ij} = A_i Y_{ij}(1) + (1-A_i)Y_{ij}(0)$. The simulated observed data are $\{(Y_{ij}, A_i, X_{ij1}, X_{ij2}, R_{i1}, R_{i2}): i=1,\dots, N_i\}_{i=1}^m$. 
{To compute performance metrics}, we generate 1,000 new clusters as the test data, following the same data-generating distribution, and repeat the above procedure to generate 1,000 data replicates.

For each simulated data set, we first construct the conformal interval with $\alpha=0.1$ for the cluster-average treatment effect. Given complete test data (with treatment and outcomes), we run Algorithm~\ref{alg:1}, and refer to this approach as ``O''. Given cluster-level covariates only, we refer to the direct approach as ``B-direct'' and the nested approach as ``B-nested''. For covariate subgroup analysis, we consider $\Omega_C = \{\overline{B}_i: R_{i1}\ge2, R_{i2}=1\}$, which contains 60\% of all clusters. For inferring the individual-average treatment effect, we adopt the same names ``O'', ''B-direct'', ``B-nested'' to refer to the output of Algorithm~\ref{alg:3}, the direct approach and the nested approach, and the covariate subgroup is defined as $\Omega_I = \{B_{ij}: |X_{ij2}|<0.5\}$, which includes $30\%$ of all individuals. {While our theoretical results support any choice of training models, to improve predictive accuracy, we consider the training model to be an ensemble learner of linear regression and random forest implemented via the \texttt{SuperLearner} R package \citep{van2007super}.} 
We consider two metrics of performance based on test data: the probability that the conformal interval contains the true treatment effect, and the average length of the conformal interval. 
For the nested approach, we set $\gamma=0.5$ to improve the informativeness of the resulting conformal interval. 

\subsection{Simulation results}
Figure~\ref{fig:sim-1} summarizes the simulation results for the marginal and local cluster-level treatment effects given $m=100$. In Figure~\ref{fig:sim-1}, the upper panels show that all three methods achieve the target 90\% coverage probability (reflected by the medians of all box plots sitting above 0.9), thereby confirming our theoretical results. Comparing the three methods, the ``O'' method has a coverage probability close to 0.9, whereas the ``B-direct'' method is the most conservative (coverage probability near 1). This difference can be explained by the lower panels, where the ``B-direct'' method yields wider conformal intervals, whose length nearly doubles the oracle length. In contrast, due to leveraging the complete information in the test data, the ``O'' method achieves the near-optimal length of the conformal interval. The performance of the ``B-nested'' method lies between the other two since we set a loose parameter $\gamma=0.5$. If we use $\gamma=0.1$ instead, it will perform similarly to the ``B-direct'' method as demonstrated in \cite{lei2021conformal} for non-clustered data. 
Comparing results for marginal versus local treatment effects, the coverage probability and length of conformal intervals are generally similar. 
In the Supplementary Material, we reproduce Figure~\ref{fig:sim-1} with $m=500$, where the span of the boxplot substantially decreases and the ``O'' method is nearly optimal under both metrics. 

\begin{figure}[htbp]
    \centering
    \includegraphics[width=0.95\textwidth]{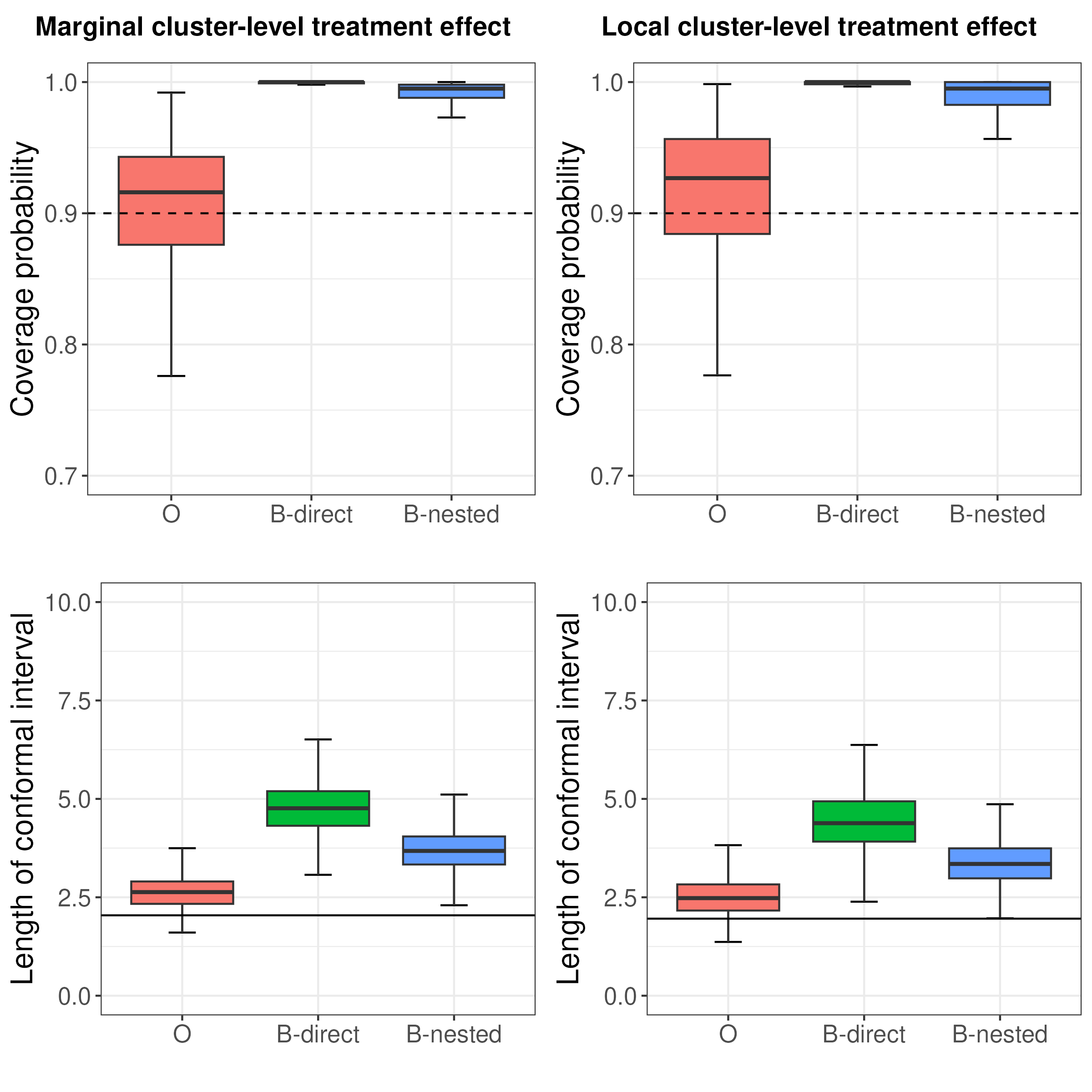}
    \caption{Simulation results (boxplot) for the marginal (left column) and local (right column, conditioning on $\{R_{i1}\ge 2, R_{i2}=1\}$) cluster-level treatment effects with $m=100$. In the upper panels, the dashed line is the target 90\% coverage probability. In the lower panels, the solid line is the oracle length of conformal intervals, computed as the average length between the $(\alpha/2,1-\alpha/2)$-quantiles of $\overline{Y}(1)-\overline{Y}(0)$ among test data. {Each box plot is based on 1000 data points, with each representing the performance metric computed from one data replicate (instead of one test data point).}}
    \label{fig:sim-1}
\end{figure}

Figure~\ref{fig:sim-2} presents the results for the marginal and local individual-level treatment effects given $m=100$. All three methods reach the target coverage probability, with patterns similar to Figure~\ref{fig:sim-1}. However, the conformal intervals for the individual-level treatment effect appear more stable than the cluster-level treatment effect, as reflected by shorter spans of the boxplot, but have wider lengths due to the increased variance in treatment effects. Results for $m=500$ are presented in the Supplementary Material with similar patterns. 

\begin{figure}[htbp]
    \centering
    \includegraphics[width=0.95\textwidth]{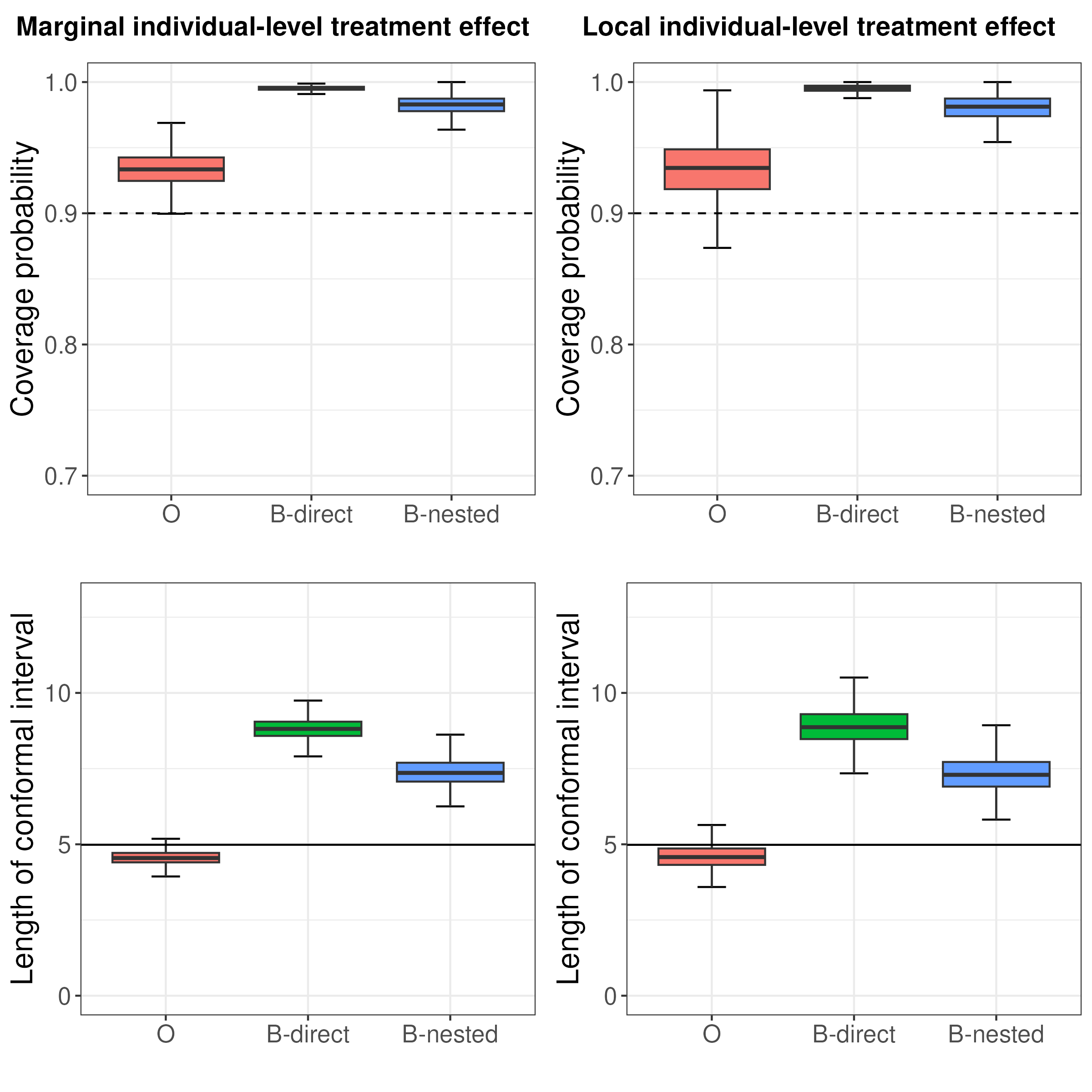}
    \caption{Simulation results (boxplot) for the marginal (left column) and local (right column, conditioning on $\{|X_{ij2}|<0.5\}$) individual-level treatment effects with $m=100$. In the upper panels, the dashed line is the target 90\% coverage probability. In the lower panels, the solid line is the oracle length of conformal intervals, computed as the average length between the $(\alpha/2,1-\alpha/2)$-quantiles of $Y(1)-Y(0)$ among test data.{Each box plot is based on 1000 data points, with each representing the performance metric computed from one data replicate (instead of one test data point).}}
    \label{fig:sim-2}
\end{figure}

In the Supplementary Material, we provide additional simulations under $m=30$ to test the performance under a smaller number of clusters. Under this scenario, we still observe valid coverage probability by the ``O'' and ``B-direct'' methods, but they become more conservative, as reflected by the increased length of intervals. In addition, we repeat our simulations for $m=30$ with only linear regression as the training model (in contrast to the ensemble method). We find that, by including random forest in the ensemble method for model training, the coverage probability for the conformal intervals has negligible differences, but the length of intervals is reduced by 8.2-46.5\%. This example demonstrates the improved accuracy in conformal causal inference by leveraging data-adaptive machine learners.


\section{Data example with the PPACT cluster randomized trial}\label{sec:data-application}
The Pain Program for Active Coping and Training study (PPACT) is a CRT evaluating the effect of a care-based cognitive behavioral therapy intervention for treating long-term opioid users with chronic pain \citep{debar2022primary}. 
The study equally randomized 106 primary care providers (clusters) to receive the intervention or usual care, with 1-10 participants in each cluster. 
We focus on the primary outcome, the PEGS (pain intensity and interference with enjoyment of life, general activity, and sleep) score at 12 months, a continuous measure of pain scale ranging from 1 to 10. For more accurate conformal causal inference, we adjust for 13 individual-level baseline variables, including the baseline PEGS score, age, gender, disability, smoking status,
body mass index, alcohol abuse, drug abuse, comorbidity, depression, number of pain types, average morphine dose, and heavy opioid usage. 

{In the real-world setting, we can use all 106 clusters to construct conformal interval functions $\widetilde{\textup{C}}_C(\overline{O})$ and $\widetilde{\textup{C}}_I(O)$ that are applicable to any new cluster or individuals in the new cluster by plugging in $\overline{O}_{\textup{test}}$ and $O_{\textup{test}}$. Here, to demonstrate our approaches, we randomly sample 20 clusters as the test data to compute performance metrics and use the rest 86 clusters to construct conformal interval functions; we repeat this process for 100 times to account for the uncertainty in the data split. }
We report the average and standard error for two performance metrics: length of intervals and fraction of negatives. Here, the fraction of negatives defines the proportion of conformal intervals that are subsets of $(-\infty, 0)$ among the test data. Since negative values are in the direction of treatment benefits, this metric reveals how many clusters/individuals are associated with beneficial treatment effects with probability $1-\alpha$, {and bears a similar interpretation to power for a one-sided test}. Because the test data have the complete information, we directly run the ``O'' method, i.e., Algorithm~\ref{alg:1} and Algorithm~\ref{alg:3}, with $f_a$ set as the ensemble learner of linear regression and random forest.

Table~\ref{data-application1} summarizes the results for the marginal treatment effects setting $\alpha\in\{0.1,0.2,0.3,0.4\}$. As $\alpha$ increases, the length of intervals decreases, and the fraction of negative becomes larger. Since the treatment effect is small (relative to the variability of treatment effects,  \citealp{wang2023CRT}), only a small to moderate proportion of the population has negative conformal intervals. In practice, these negative conformal intervals can be informative for new patients and clusters from the same source population generating the observed trial sample. 

\begin{table}[htbp]
\caption{Summary results of data application for marginal treatment effects. For both the length of intervals and the fraction of negatives, we present the average and standard error over 100 runs.}\label{data-application1}
\centering
\resizebox{\textwidth}{!}{
\begin{tabular}{ccccc}
  \hline
\multirow{2}[3]{*}{\shortstack{Coverage\\ probability}}  & \multicolumn{2}{c}{Marginal cluster-level treatment effect} & \multicolumn{2}{c}{Marginal individual-level treatment effect}\\
  \cmidrule(lr){2-3} \cmidrule(lr){4-5}
 & Length of intervals & Fraction of negatives & Length of intervals & Fraction of negatives \\ 
  \hline
90\% & 4.056(0.557) & 0.089(0.069) & 6.908(0.590) & 0.055(0.026) \\ 
  80\% & 2.874(0.412) & 0.173(0.092) & 4.800(0.345) & 0.132(0.044) \\ 
  70\% & 2.233(0.313) & 0.238(0.104) & 3.811(0.220) & 0.199(0.052) \\ 
  60\% & 1.799(0.255) & 0.304(0.117) & 3.108(0.189) & 0.257(0.055) \\ 
   \hline
\end{tabular}
}
\end{table}

{If the research interest lies in obtaining the conformal interval for a cluster or an individual within the CRT sample, we could use the target cluster as the test data and the rest as the training and calibration data. As a demonstration, Figure~\ref{fig:data-application-125} gives the cluster-level and individual-level conformal intervals with 90\% coverage for a control cluster. It shows that the treatment is 90\% likely to be beneficial at the cluster level; at the individual level, one of the three conformal intervals excludes zero, and the treatment is 90\% likely to be beneficial for this individual. }

\begin{figure}[htbp]
\centering
        \includegraphics[width=0.8\textwidth]{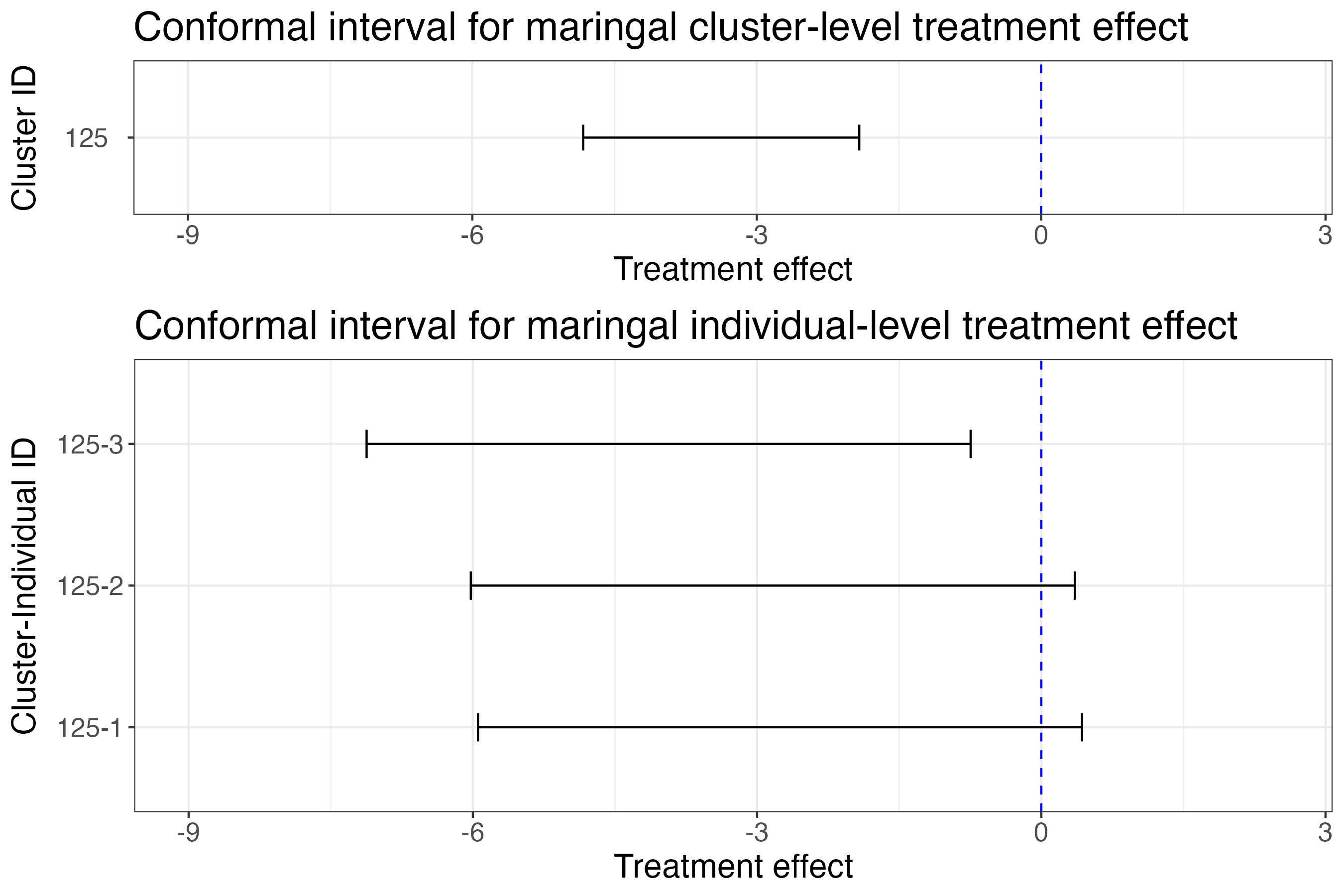}
        \caption{Cluster-level and individual-level treatment effects for one cluster with three individuals in the controlled arm of the PPACT study.}
\label{fig:data-application-125}
\end{figure}

Finally, we performed subgroup analyses on individuals with severe or moderate baseline pain, and the results are summarized in the Supplementary Material.

\section{Discussion}\label{sec:discussion}


{In this article, we develop the conformal causal inference framework to study treatment effects in CRTs, and offer a complementary non-asymptotic framework to existing approaches that target the average treatment effects \citep[e.g.,][]{benitez2021comparative,su2021model,wang2023CRT}.} Although our data example in Section \ref{sec:data-application} includes a large number of clusters, 
our proposed methods are not confined to a large number of clusters, given appropriate adjustments of the target coverage probability. 
As a practical consideration, for achieving 90\% coverage probability of the conformal interval for the cluster-level treatment effects, we need at least 10 clusters per arm for numerically stable calibration, and our simulation shows that the conformal interval is valid but moderately conservative given $m=30$ and $\pi = 0.5$. Given fewer clusters, e.g., $m=20$, targeting 90\% coverage probability will only result in intervals spanning the entire real line, and more informative intervals are possible with a lower target coverage probability (say 80\%). In practice, we recommend setting $\alpha=0.05$ if $m\ge 80$, $\alpha=0.1$ if $40\le m <80$, and $\alpha=0.2$ if $20\le m < 40$ given $\pi = 0.5$. If $10< m < 20$, we can still construct the 80\% conformal interval for the individual-level treatment effect, but it may be challenging to make conformal causal inference given $m\le 10$. Conformal causal inference with an extremely small number of clusters is a topic of future research. 

{We have considered the basic CRT setting in this development, but our results can be extended in several directions.
First, 
although we assume cluster randomization, conformal causal inference can be extended to clustered observational studies with a cluster-level treatment under strong ignorability, along the lines of \citet{lei2021conformal} and \citet{yang2022doubly}. With this change, one can perform conformal causal inference for treatment effects among the treated clusters and individuals, and the details are provided in the Supplementary Material. 
Second, we assume simple cluster randomization, whereas in other cases covariate-adaptive randomization, such as pair-matching \citep{balzer2016adaptive}, may be used. The extension of conformal causal inference to accommodate covariate-adaptive randomization represents an important future research direction. Lastly, it would also be useful to relax the within-cluster exchangeability assumption for inferring individual-level treatment effects. Although some efforts have been made in the conformal prediction literature to address the issues of non-exchangeability and distribution shifts under various settings \citep{tibshirani2019conformal,barber2023conformal,dobriban2023symmpi}, addressing non-exchangeability with clustered data presents non-trivial challenges and warrants additional investigation. }





\backmatter


\section*{Acknowledgements}
Research in this article was supported by a Patient-Centered Outcomes Research Institute Award\textsuperscript{\textregistered} (PCORI\textsuperscript{\textregistered} Award ME-2022C2-27676) and the National Institute Of Allergy And Infectious Diseases of the National Institutes of Health under Award Number R00AI173395. The statements presented in this article are solely the responsibility of the authors and do not necessarily represent the official views of the National Institutes of Health or PCORI\textsuperscript{\textregistered}, its Board of Governors, or the Methodology Committee.

\vspace{-0.1in}
\section*{Supplementary Material}
Web Appendices, Tables, and Figures, and code referenced in Sections~\ref{sec:cluster-te}-\ref{sec:data-application} are available with this paper at the Biometrics website on Oxford Academic.

\vspace{-0.1in}
\section*{Data availability}
The data underlying this article were provided by the PPACT study team. Data will be shared on request to the corresponding author with permission of \cite{debar2022primary}.

\vspace{-0.1in}


%
\bibliographystyle{biom} 
\bibliography{ref}








\label{lastpage}

\end{document}


\def\spacingset#1{\renewcommand{\baselinestretch}%
{#1}\small\normalsize} \spacingset{1}

\date{\vspace{-5ex}}

\maketitle

\renewcommand\thesection{\Alph{section}}
\renewcommand\thealgorithm{\Alph{section}.\arabic{algorithm}}
\renewcommand\thetheorem{\Alph{section}.\arabic{theorem}}

\spacingset{1.5}

In Section~\ref{sec: exchangeability-examples}, we provide an example class of data generating distributions that imply within-cluster echangeability.
In Section~\ref{sec:proofs}, we provide the proofs for our theoretical results. In Section~\ref{sec: nested}, we provide the nested approach for constructing conformal intervals based on CRTs. In Section~\ref{sec: additional-simulations}, we provide additional numerical results. 
In Section~\ref{sec: cluster-observational-studies}, we show how the proposed approach can be extended to study the treatment effect on the treated in cluster observational studies.
The R code for reproducing all simulation and data analysis is available at \url{https://github.com/BingkaiWang/CRT-conformal}.

\section{A class of data-generating processes that imply Assumption 3}\label{sec: exchangeability-examples}

To contextualize the within-cluster exchangeability assumption (required for inferring the individual-level treatment effect target), we discuss am example class of data-generating processes for which Assumption 3 holds. This example is meant to illustrate that, under Assumption 3, the unknown intracluster correlation structure among the outcomes can still flexibly depend on the covariates, and the model space after assuming within-cluster exchangeability remains considerably large.

We first assume that the joint distribution of $\{X_{\bullet,1}, \ldots, X_{\bullet,N}\}$ given $R$, $N$ is invariant to permutation of individual component indices. A necessary condition for this process is that the marginal distribution of $X_{\bullet,j}$ is common across $j$. This assumption, although strong, is often invoked in simulation studies for validating and comparing methods for CRTs. Next, we consider the following restricted moment random coefficient model for the pair of individual potential outcomes:
\begin{align}\label{eq:bivariate}
    \left(\begin{array}{c}
        Y_{\bullet,j}(1) \\
        Y_{\bullet,j}(0) \end{array}\right)&=
    \left(\begin{array}{c}
        \eta_1(X_{\bullet,j},R,N)\\
        \eta_0(X_{\bullet,j},R,N)\end{array}\right)
    + \left(\begin{array}{c}
        g_1^\top(X_{\bullet,j})\gamma(1)\\
        g_0^\top(X_{\bullet,j})\gamma(0)\end{array}\right)
    +\left(\begin{array}{c}
       {h_1(X_{\bullet,j}, R, N)} \epsilon_{\bullet,j}(1)\\
        {h_0(X_{\bullet,j}, R, N)}\epsilon_{\bullet,j}(0)\end{array}\right),
\end{align}
where $\eta_1$, $\eta_2$ are completely unspecified functions of the baseline covariates and represent fixed effects, $g_1(X_{\bullet,j})$, $g_0(X_{\bullet,j})$ are design vectors (including intercept) for random intercept and coefficients and are an unknown function of $X_{\bullet,j}$, $(\gamma^\top(1),\gamma^\top(0))^\top$ is a random effect vector that jointly follows a multivariate distribution with mean zero and variance-covariance matrix $\Lambda(R,N)=\begin{pmatrix}
\Lambda_1(R,N) & \Lambda_{10}(R,N)\\
\Lambda_{10}^\top(R,N) & \Lambda_0(R,N)\\
\end{pmatrix}$ that is allowed to depend on cluster-level covariates $R,N$, and independently, $(\epsilon_{\bullet,j}(1),\epsilon_{\bullet,j}(0))^\top$ are bivariate errors that follow a bivariate distribution with mean zero and variance-covariance matrix {$\Sigma=\begin{pmatrix}
\sigma_1^2 & \sigma_{10}\\
\sigma_{10} & \sigma_{0}^2\\
\end{pmatrix}$ and that are also independent across $j$.}
$\Sigma(X_{\bullet,j})=\begin{pmatrix}
\sigma_1^2(X_{\bullet,j}) & \sigma_{10}(X_{\bullet,j})\\
\sigma_{10}(X_{\bullet,j}) & \sigma_{0}^2(X_{\bullet,j})\\
\end{pmatrix}$ that is allowed to depend on individual-level covariates. 
This data-generating process has the following features. (I) \emph{semiparametric}: the class of data-generating processes only impose moment conditions for the cluster-level random effects (intercept and coefficients) and the individual-level random errors, but leaves the fixed-effects components and random-effects design vectors completely unspecified; (II) \emph{heteroscedastic}: the variance-covariance matrix of the cluster-level random effects is allowed to depend on cluster-level covariates, and the variance-covariance matrix of the individual-level errors is allowed to depend on individual-level covariates, without specifying the form of dependency. When $g_1(X_{\bullet,j})=g_0(X_{\bullet,j})=1$, and the distributions of $(\gamma(1),\gamma(0))^\top$, $(\epsilon_{\bullet,j}(1),\epsilon_{\bullet,j}(0))^\top$ are both bivariate normal with constant variance-covariance matrices, \eqref{eq:bivariate} becomes the bivariate linear mixed model discussed, for example, in \citet{yang2023power} for CRTs with multivariate outcomes; (III) \emph{heterogeneous intracluster correlation structure}: the class of models \eqref{eq:bivariate} imply the following intracluster correlation coefficients (ICCs) {that can vary by pairs of individuals and by clusters, depending on the individual-level and cluster-level covariates}. 
\begin{enumerate}
\item For two treated (control) potential outcomes from different individuals in the same cluster, the \emph{single-world between-individual ICC} is given by
\begin{align*}
&Corr(Y_{\bullet,j}(1),Y_{\bullet,l}(1)|X_{\bullet,j},X_{\bullet,l}{, R, N})=\frac{g_1^\top(X_{\bullet,j})\Lambda_1(R,N)g_1(X_{\bullet,l})}{\sqrt{V_{\bullet,j}(1)}\sqrt{V_{\bullet,l}(1)}},\\
&Corr(Y_{\bullet,j}(0),Y_{\bullet,l}(0)|X_{\bullet,j},X_{\bullet,l}{, R, N})=\frac{g_0^\top(X_{\bullet,j})\Lambda_0(R,N)g_0(X_{\bullet,l})}{\sqrt{V_{\bullet,j}(0)}\sqrt{V_{\bullet,l}(0)}},
\end{align*}
where $V_{\bullet,j}(a)=g_0^\top(X_{\bullet,j})\Lambda_a(R,N)g_a(X_{\bullet,j})+{h_a^2(X_{\bullet,j}, R, N)}\sigma_a^2$ for $a =0,1$.
\item For each individual, the \emph{cross-world within-individual ICC} for the pair of potential outcomes is 
\begin{align*}
&Corr(Y_{\bullet,j}(1),Y_{\bullet,j}(0)|X_{\bullet,j}{, R, N})=\frac{g_1^\top(X_{\bullet,j})\Lambda_{10}(R,N)g_0(X_{\bullet,j})+{h_1(X_{\bullet,j}, R, N)}{h_0(X_{\bullet,j}, R, N)}\sigma_{10}}{\sqrt{V_{\bullet,j}(1)}\sqrt{V_{\bullet,j}(0)}}.
\end{align*}
\item For potential outcomes from two different individuals in the same cluster, the \emph{cross-world between-individual ICC} is 
\begin{align*}
&Corr(Y_{\bullet,j}(1),Y_{\bullet,l}(0)|X_{\bullet,j},X_{\bullet,l}{, R, N})=\frac{g_1^\top(X_{\bullet,j})\Lambda_{10}(R,N)g_0(X_{\bullet,l})}{\sqrt{V_{\bullet,j}(1)}\sqrt{V_{\bullet,l}(0)}}.
\end{align*}
\end{enumerate}

Next, we show that within-cluster exchangeability holds under this class of semiparametric data-generating processes. To see this, we factorize the joint distribution of the full data vector for a cluster by 
\begin{align*}
f(W_{\bullet,1}, \dots, W_{\bullet,N})=&\left\{\prod_{j=1}^N f(Y_{\bullet,j}(1),Y_{\bullet,j}(0)|\gamma(1),\gamma(0),X_{\bullet,j},R,N)\right\}\\
&\times f(\gamma(1),\gamma(0)|R,N) f(X_{\bullet,1}, \ldots, X_{\bullet,N}|R,N)f(R,N).
\end{align*}
Under any random permutation of indices (mapping from $\{1,\dots,N\}$ to $\{\sigma(1),\dots,\sigma(N)\}$), we get
\begin{align*}
f(W_{\bullet,\sigma(1)}, \dots, W_{\bullet,\sigma(N)})=&\left\{\prod_{j=1}^N f(Y_{\bullet,\sigma(j)}(1),Y_{\bullet,\sigma(j)}(0)|\gamma(1),\gamma(0),X_{\bullet,\sigma(j)},R,N)\right\}\\
&\times f(\gamma(1),\gamma(0)|R,N) f(X_{\bullet,\sigma(1)}, \ldots, X_{\bullet,\sigma(N)}|R,N)f(R,N).
\end{align*}
By assumption, we have $f(X_{\bullet,\sigma(1)}, \ldots, X_{\bullet,\sigma(N)}|R,N)=f(X_{\bullet,1}, \ldots, X_{\bullet,N}|R,N)$, and under the bivariate random coefficient model \eqref{eq:bivariate}, we have $\prod_{j=1}^N f(Y_{\bullet,j}(1),Y_{\bullet,j}(0)|\gamma(1),\gamma(0),X_{\bullet,j},R,N)=\prod_{j=1}^N f(Y_{\bullet,\sigma(j)}(1),Y_{\bullet,\sigma(j)}(0)|\gamma(1),\gamma(0),X_{\bullet,\sigma(j)},R,N)$; thus we have $f(W_{\bullet,\sigma(1)}, \dots, W_{\bullet,\sigma(N)})=f(W_{\bullet,1}, \dots, W_{\bullet,N})$ and within-cluster exchangeability holds. 

\section{Proofs}\label{sec:proofs}
\subsection{Proof of Theorem 1}
\begin{proof}[Proof of Theorem 1.]
The proof of Theorem 1 follows the classical split conformal causal inference \citep{lei2021conformal}, where the key difference in our method is that we only consider the data within \(\Omega_C\) to obtain the local coverage.

Since $\widetilde{\textup{C}}_{C}(\overline{O}_{\textup{test}}) = (-1)^{A_{\textup{test}} + 1} \{\overline{Y}_{\textup{test}} - \widetilde{\textup{C}}_{C,1-A_{\textup{test}}}(\overline{B}_{\textup{test}})\}$ and $\overline{Y}_{\textup{test}} = A_{\textup{test}}\overline{Y}_{\textup{test}}(1)+(1-A_{\textup{test}})\overline{Y}_{\textup{test}}(0)$, we have
\begin{align*}
     &P\bigg\{\overline{Y}_{\textup{test}}(1) - \overline{Y}_{\textup{test}}(0) \not\in \widetilde{\textup{C}}_C(\overline{O}_{\textup{test}})\bigg|\overline{B}_{\textup{test}} \in \Omega_C\bigg\} \\
     &= \sum_{a=0}^1 P\bigg\{\overline{Y}_{\textup{test}}(1) - \overline{Y}_{\textup{test}}(0) \not\in \widetilde{\textup{C}}_C(\overline{O}_{\textup{test}}),  A_{\textup{test}}=a\bigg|\overline{B}_{\textup{test}} \in \Omega_C\bigg\} \\
     &=  \sum_{a=0}^1 P\bigg\{\overline{Y}_{\textup{test}}(a)\not\in \widetilde{\textup{C}}_{C,a}(\overline{B}_{\textup{test}}),  A_{\textup{test}}=1-a\bigg|\overline{B}_{\textup{test}} \in \Omega_C\bigg\} \\
     &\le \sum_{a=0}^1 P\bigg\{\overline{Y}_{\textup{test}}(a)\not\in \widetilde{\textup{C}}_{C,a}(\overline{B}_{\textup{test}})\bigg|\overline{B}_{\textup{test}} \in \Omega_C\bigg\}.
\end{align*}
Therefore, it remains to show $P\bigg\{\overline{Y}_{\textup{test}}(a)\not\in \widetilde{\textup{C}}_{C,a}(\overline{B}_{\textup{test}})\bigg|\overline{B}_{\textup{test}} \in \Omega_C\bigg\} \le \alpha$ for $a = 0,1$. 

Denote $\mathcal{I} = \mathcal{I}_{ca}(a) \cup \{\textup{test}\}$. By Assumptions 1-2 and the additional assumption in Theorem 1, $\{(\overline{Y}_i(a), \overline{B}_i): i \in \mathcal{I}\}$ are i.i.d. given $\overline{B}_i \in \Omega_C$. 
Of note, here we use Assumption 2 to obtain $(\overline{Y}_i(a), \overline{B}_i)|(A_i=a, \overline{B}_i \in \Omega_C)$ is identically distributed as $(\overline{Y}_i(a), \overline{B}_i)|(\overline{B}_i \in \Omega_C)$ for $i \in \mathcal{I}_{ca}(a)$.
Since $\widehat{f}_a$ is a function of the training data, which are independent of the calibration data and test data, then $\{s(\overline{B}_i,\overline{Y}_i(a)), i \in \mathcal{I}\}$ are also i.i.d. conditioning on the training fold and $\overline{B}_i \in \Omega_C$. Then,
\begin{align*}
    P\left\{s(\overline{B}_{\textup{test}},\overline{Y}_{\textup{test}}(a)) \le \textup{Quantile}_{1-\alpha}\left(\frac{1}{|\mathcal{I}| }\sum_{i \in \mathcal{I}} \delta_{s(\overline{B}_i,\overline{Y}_i(a))}\right)\bigg|\overline{B}_{\textup{test}} \in \Omega_C\right\} \ge 1-\alpha.
\end{align*}
Since the quantile of an empirical distribution does not decrease if we replace one sample by $+\infty$, we also have
\begin{align*}
    P\left\{s(\overline{B}_{\textup{test}},\overline{Y}_{\textup{test}}(a)) \le \textup{Quantile}_{1-\alpha}\left(\frac{1}{|\mathcal{I}_{ca}(a)|+1 }\sum_{i \in \mathcal{I}_{ca}(a)} \delta_{s(\overline{B}_i,\overline{Y}_i(a))} + \delta_{+\infty}\right)\bigg| \overline{B}_{\textup{test}} \in \Omega_C\right\} \ge 1-\alpha.
\end{align*}
Since $\overline{Y}_i(a) = \overline{Y}_i$ for $i \in \mathcal{I}_{ca}(a)$, we have
\begin{align*}
    \textup{Quantile}_{1-\alpha}\left(\frac{1}{|\mathcal{I}_{ca}(a)| +1}\sum_{i \in \mathcal{I}_{ca}(a)} \delta_{s(\overline{B}_i,\overline{Y}_i(a))} + \delta_{+\infty}\right) = \textup{Quantile}_{1-\alpha}(\widehat{F}) = \widehat{q}_{1-\alpha}(a).
\end{align*}
Therefore, 
\begin{align*}
    P\left(\overline{Y}_{\textup{test}}(a)\in \widetilde{\textup{C}}_{C,a}(\overline{B}_{\textup{test}})\Big| \overline{B}_{\textup{test}} \in \Omega_C\right) = P\left(s(\overline{B}_{\textup{test}},\overline{Y}_{\textup{test}}(a)) \le \widehat{q}_{1-\alpha}(a)\Big| \overline{B}_{\textup{test}}  \in \Omega_C\right) \ge 1-\alpha,
\end{align*}
which completes the proof.
\end{proof}

\subsection{Proof of Theorem 2}
\begin{lemma}\label{lemma1}
Denote $M_i = \sum_{j=1}^{N_i} I\{ B_{ij} \in \Omega_I\}$ and $1\le j_1\le \dots\le j_{M_i} \le N_i$ be the ordered indices such that $ B_{ij_k} \in \Omega_I$ for $k =1,\dots, M_i$. Letting $e_{ij_k}$ be a length-$N_i$ vector with the $j_k$-th entry 1 and the rest zero, we define a matrix $D_i = [e_{ij_1},\dots, e_{ij_{M_i}}] \in \mathbb{R}^{N_i \times M_i}$ and $\widetilde{W}_i = D_i^\top W_i$, where $W_i = (V_{i1}^\top, \dots, V_{iN_i}^\top)^\top \in \mathbb{R}^{N_i\times (p+2)}$ for $V_{ij} = (Y_{ij}(1), Y_{ij}(0), B_{ij}) \in \mathbb{R}^{p+2}$. In other words, $\widetilde{W}_i$ is the data from cluster $i$ after filtering out individuals with $B_{ij} \not\in \Omega_I$.
For notational convenience, we denote the $j$-the row of $\widetilde{W}_i$ as $\widetilde{V}_{ij}$.
Under Assumptions 1 and 3, we have

(a) $(\widetilde{W}_1,\dots, \widetilde{W}_m)$ are independent and identically distributed.

(b) Within $\widetilde{W}_i$, the distribution of $(\widetilde{V}_{i1},\dots, \widetilde{V}_{iM_i})$ is exchangeable conditioning on $M_i$.

(c) Given the above results, $\widetilde{V}_{ij}$ has the same distribution as  $V_{ij}|(B_{ij} \in \Omega_I)$ for any $j$. It implies that $E[f(\widetilde{V}_{ij}, U)] = E[f(V_{ij}, U)|B_{ij} \in \Omega_I]$ for any $U$ that is independent of $(V_{ij}, \widetilde{V}_{ij})$ and any intergrable function $f$. 
\end{lemma}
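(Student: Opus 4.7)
The plan is to prove (a), (b), and (c) in sequence, each leveraging the previous. Part (a) is essentially immediate from Assumption~1. Since the cluster-level data $W_1, \dots, W_m$ are i.i.d., and $\widetilde{W}_i$ is a deterministic (measurable) function of $W_i$ alone---the matrix $D_i$ being completely determined by $(B_{i1}, \dots, B_{iN_i})$, which is a sub-vector of $W_i$---the filtered vectors $\widetilde{W}_1, \dots, \widetilde{W}_m$ inherit the i.i.d.\ property.

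The main technical content lies in part (b), which I would prove via a permutation trick using Assumption~3. Fix a cluster $i$, condition on $N_i = n$ and on a specific realization of the selected index set $S = \{j_1 < \dots < j_m\}$ (so $M_i = m$). For any permutation $\sigma$ of $\{1, \dots, m\}$, extend it to a permutation $\pi$ of $\{1, \dots, n\}$ by setting $\pi(j_k) = j_{\sigma(k)}$ on $S$ and $\pi(l) = l$ off $S$. This $\pi$ fixes $S$ setwise, so it preserves the selection event $\{B_{il} \in \Omega_I \iff l \in S\}$ and the $B$-values outside $S$. Within-cluster exchangeability then yields $(V_{i\pi(1)}, \dots, V_{i\pi(n)}) \stackrel{d}{=} (V_{i1}, \dots, V_{in})$; restricting to the selected positions gives $(\widetilde{V}_{i\sigma(1)}, \dots, \widetilde{V}_{i\sigma(m)}) \stackrel{d}{=} (\widetilde{V}_{i1}, \dots, \widetilde{V}_{im})$ conditional on $S$. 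Marginalizing over realizations of $S$ compatible with $M_i = m$ then gives the required exchangeability given $M_i$.

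Part (c) would then follow from combining (b) with exchangeability of the original sequence. By (b), every entry $\widetilde{V}_{ij}$ shares a common conditional marginal given $M_i$, and by within-cluster exchangeability the law $V_{ij} \mid B_{ij} \in \Omega_I$ does not depend on $j$. The remaining step is a symmetrization---averaging uniformly over the selected positions and then invoking exchangeability of the original vector---to identify the common filtered marginal with $V_{ij} \mid B_{ij} \in \Omega_I$. The operational consequence $E[f(\widetilde{V}_{ij}, U)] = E[f(V_{ij}, U) \mid B_{ij} \in \Omega_I]$ for $U$ independent of $(V_{ij}, \widetilde{V}_{ij})$ is then immediate from the matching marginals.

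The main obstacle I anticipate is the careful bookkeeping in part (b): constructing the extended permutation $\pi$, verifying that it simultaneously preserves the selection event and induces $\sigma$ on the selected positions, and handling the random conditioning on the selected set $S$ rigorously. Once (b) is in hand, (a) is routine and (c) follows from a standard symmetrization argument.
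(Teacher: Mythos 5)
Your parts (a) and (b) are correct and follow essentially the same route as the paper. For (a) the paper argues identically: $D_i$, hence $\widetilde{W}_i=D_i^\top W_i$, is a deterministic function of $W_i$ and $\Omega_I$, so the i.i.d.\ property is inherited from Assumption~1. For (b) the paper also conditions on the selection pattern $Q_i=(I\{B_{i1}\in\Omega_I\},\dots,I\{B_{iN_i}\in\Omega_I\})$, establishes exchangeability given $(Q_i,M_i)$ by applying within-cluster exchangeability to permutations that move only the selected coordinates (the paper writes this as an explicit computation of $P(V_{i1}\in A_1,\dots,V_{im}\in A_m\mid\cdot)$ with $B=\mathcal{D}_Y\times\Omega_I$ and $C=\mathcal{D}_Y\times\Omega_I^c$, which is your extended permutation $\pi$ in probability form), and then marginalizes over $Q_i$ given $M_i=m$ exactly as you propose.

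The one place your sketch does not close the argument is (c), where the identification of the common filtered marginal with the law of $V_{ij}\mid(B_{ij}\in\Omega_I)$ is deferred to ``a standard symmetrization.'' If you carry out that averaging, part (b) gives $P(\widetilde{V}_{i1}\in A\mid M_i\ge 1)=E\bigl[M_i^{-1}\sum_{l}I\{V_{il}\in A,\,B_{il}\in\Omega_I\}\mid M_i\ge 1\bigr]$, an equally weighted average of within-cluster proportions, whereas exchangeability of $W_i$ (with $N_i$ fixed, say) gives $P(V_{ij}\in A\mid B_{ij}\in\Omega_I)=E\bigl[\sum_{l}I\{V_{il}\in A,\,B_{il}\in\Omega_I\}\bigr]/E[M_i]$, an $M_i$-weighted (size-biased) average; these two quantities are not equal by symmetry alone, since the number of selected individuals $M_i$ may covary with the within-cluster data. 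So the ``remaining step'' is precisely where the content of (c) lies, and as described your symmetrization does not deliver it without an additional argument making explicit how the selection pattern is conditioned on and marginalized. In fairness, the paper's own proof of (c) is equally terse---it asserts $\widetilde{V}_{ij}\overset{d}{=}V_{ij'}\mid(B_{ij'}\in\Omega_I)$ for the selected index $j'$ and then drops the prime by exchangeability---so you are on the paper's route; but if you write (c) out, this weighting issue is the step that must be addressed. The final claim $E[f(\widetilde{V}_{ij},U)]=E[f(V_{ij},U)\mid B_{ij}\in\Omega_I]$ then follows from the independence of $U$ exactly as in the paper.
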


\begin{proof}[Proof of Lemma~\ref{lemma1}.]
To prove (a), we observe that $D_i$ is a deterministic function of $W_i$ and $\Omega_I$. Since  $\widetilde{W}_i = D_i^\top W_i$, Assumption 1 implies that $(\widetilde{W}_1,\dots, \widetilde{W}_m)$ are independent and identically distributed.

To prove (b), conditional on $N_i$, we define $Q_i=(I\{B_{i1} \in \Omega_I\}, \dots, I\{B_{iN_i} \in \Omega_I\})$. We first prove that $(\widetilde{V}_{i1},\dots, \widetilde{V}_{iM_i})| Q_i=q, M_i=m$ are exchangeable.

Without loss of generality, we assume  $B_{ij},j=1,\cdots,m\in \Omega_I$ (the first $m$ elements). For any measureable sets $A_1,\cdots,A_m$, the conditional distribution of $(\widetilde{V}_{i1},\dots, \widetilde{V}_{iM_i})| Q_i=q, M_i=m$ is given by
\begin{align*}
(i):=P(V_{i1}\in A_1,\cdots, V_{im}\in A_m| V_{i1}\in B,\cdots, V_{im}\in B, V_{i(m+1)}\in C, \cdots, V_{i N_i}\in C).
\end{align*}
Here we denote the domain of $(Y(1),Y(0))$ as $\mathcal{D}_Y\in \mathbb{R}^2$ and let $B=\mathcal{D}_Y\times \Omega_I$ and $C=\mathcal{D}_Y\times \Omega_I^c.$
It holds that
$$(i)=\frac{P(V_{i1}\in A_1\cap B,\cdots, V_{im}\in A_m\cap B, V_{i(m+1)}\in C, \cdots, V_{i N_i}\in C)}{P(V_{i1}\in B,\cdots, V_{im}\in B, V_{i(m+1)}\in C, \cdots, V_{i N_i}\in C)} $$
Since conditional on $N_i$, $(V_{i1},\cdots,V_{iN_i})$ are exchangeable, for any permutation map $\sigma\in S_m$ (where $S_m$ is the permutation group on $1,\dots,m$), we obtain
\begin{align*}
 (i)&=\frac{P(V_{i\sigma(1)}\in A_1\cap B,\cdots, V_{i\sigma(m)}\in A_m\cap B, V_{i(m+1)}\in C, \cdots, V_{i N_i}\in C)}{P(V_{i1}\in B,\cdots, V_{im}\in B, V_{i(m+1)}\in C)}\\&=\frac{P(V_{i\sigma(1)}\in A_1,\cdots, V_{i\sigma(m)}\in A_m, V_{i1}\in B,\cdots, V_{im}\in B,V_{i(m+1)}\in C, \cdots, V_{i N_i}\in C)}{P(V_{i1}\in B,\cdots, V_{im}\in B, V_{i(m+1)}\in C) }\\&=P(V_{i\sigma(1)}\in A_1,\cdots, V_{i\sigma(m)}\in A_m|  V_{i1}\in B,\cdots, V_{im}\in B,V_{i(m+1)}\in C, \cdots, V_{i N_i}\in C).
 \end{align*}
Therefore, we have $[(\widetilde{V}_{i1},\dots, \widetilde{V}_{iM_i})| Q_i=q, M_i=m] \overset{d}{=}[(\widetilde{V}_{i\sigma(1)},\dots, \widetilde{V}_{i\sigma(M_i)})| Q_i=q, M_i=m]$ for any $\sigma\in S_m.$

We next prove the exchangeability by only conditioning on $M_i=m.$

We know that for any $\sigma\in S_m$
\begin{align*}
&P(\widetilde{V}_{i\sigma(1)},\dots, \widetilde{V}_{i\sigma(M_i)}\in A_1\times\cdots \times A_m| M_i=m)\\&\qquad=\sum_{Q_i=q}P(\widetilde{V}_{i\sigma(1)},\dots, \widetilde{V}_{i\sigma(M_i)}\in A_1\times\cdots\times A_m| Q_i=q,M_i=m)\cdot P(Q_i=q| M_i=m)
\\&\qquad =\sum_{Q_i=q}P(\widetilde{V}_{i1},\dots, \widetilde{V}_{iM_i}\in A_1\times\cdots\times A_m| Q_i=q,M_i=m)\cdot P(Q_i=q| M_i=m)
\\&\qquad =P(\widetilde{V}_{i1},\dots, \widetilde{V}_{iM_i}\in A_1\times\cdots\times A_m | M_i=m).
\end{align*}
We then conclude our proof of (b).

Finally, we prove (c). By definition, $\widetilde{V}_{ij}$ is equal to $V_{ij'}$ for some $j'$ such that $B_{ij'} \in \Omega_I$. Therefore, $\widetilde{V}_{ij} \overset{d}{=} V_{ij'}|(B_{ij'} \in \Omega_I)$. By the exchangeability of $W_i$, we have  $\widetilde{V}_{ij} \overset{d}{=} V_{ij}|(B_{ij} \in \Omega_I)$. In addition, 
\begin{align*}
    E[f(V_{ij}, U)|B_{ij} \in \Omega_I] &= \int f(V_{ij}, U) dP(V_{ij}, U|B_{ij} \in \Omega_I) \\
    &= \int f(V_{ij}, U) d\{P(V_{ij}|B_{ij} \in \Omega_I)P(U)\}\\
    &= \int f(\widetilde{V}_{ij}, U) d\{P(\widetilde{V}_{ij})P(U)\}\\
    &= \int f(\widetilde{V}_{ij}, U) d P(\widetilde{V}_{ij},U)\\
    &= E[f(\widetilde{V}_{ij}, U)],
\end{align*}
where the second and fourth equation uses the independence between $U$ and $(V_{ij},\widetilde{V}_{ij})$, and the thrid equation results from $\widetilde{V}_{ij} \overset{d}{=} V_{ij}|(B_{ij} \in \Omega_I)$.
\end{proof}

\begin{proof}[Proof of Theorem 2.]
Since $\widetilde{\textup{C}}_{I}(O_{\textup{test}}) = (-1)^{A_{\textup{test}} + 1} \{Y_{\textup{test}} - \widetilde{\textup{C}}_{I,1-A_{\textup{test}}}(B_{\textup{test}})\}$ and $Y_{\textup{test}} = A_{\textup{test}}Y_{\textup{test}}(1)+(1-A_{\textup{test}})Y_{\textup{test}}(0)$, we have
\begin{align*}
     &P\bigg\{Y_{\textup{test}}(1) - Y_{\textup{test}}(0) \not\in \widetilde{\textup{C}}_I(O_{\textup{test}})\bigg|B_{\textup{test}} \in \Omega_C\bigg\} \\
     &= \sum_{a=0}^1 P\bigg\{Y_{\textup{test}}(1) - Y_{\textup{test}}(0) \not\in \widetilde{\textup{C}}_I(O_{\textup{test}}),  A_{\textup{test}}=a\bigg|B_{\textup{test}} \in \Omega_C\bigg\} \\
     &=  \sum_{a=0}^1 P\bigg\{Y_{\textup{test}}(a)\not\in \widetilde{\textup{C}}_{I,a}(B_{\textup{test}}),  A_{\textup{test}}=1-a\bigg|B_{\textup{test}} \in \Omega_C\bigg\} \\
     &\le \sum_{a=0}^1 P\bigg\{Y_{\textup{test}}(a)\not\in \widetilde{\textup{C}}_{I,a}(B_{\textup{test}})\bigg|B_{\textup{test}} \in \Omega_C\bigg\}.
\end{align*}
Therefore, it remains to show $P\bigg\{Y_{\textup{test}}(a)\not\in \widetilde{\textup{C}}_{I,a}(B_{\textup{test}})\bigg|B_{\textup{test}} \in \Omega_C\bigg\} \le \alpha$ for $a = 0,1$.

To this end, we denote $W_{m+1}$ as a new cluster independently sampled from $\mathcal{P}^{W}$. By construction, $(Y_{\textup{test}}(1), Y_{\textup{test}}(0), B_{\textup{test}})$ comes from an arbitrary individual in this cluster. By Assumption 3, $(Y_{\textup{test}}(1), Y_{\textup{test}}(0), B_{\textup{test}})$ is identically distributed as $V_{m+1,1}=(Y_{m+1,1}(1), Y_{m+1,1}(0), B_{m+1,1})$, where $V_{m+1,j}$ is the $j$-th row of $W_{m+1}$. Therefore, our goal is to show $P\bigg\{Y_{m+1,1}(a)\not\in \widetilde{\textup{C}}_{I,a}(B_{m+1,1})\bigg|B_{m+1,1} \in \Omega_C\bigg\} \le \alpha$ for $a = 0,1$. To further simplify the goal, we denote $\widetilde{W}_{m+1}$ as the new cluster after we filter out individuals with $B_{m+1,j} \not\in \Omega_I$ and let $\widetilde{V}_{m+1,j} = (\widetilde{Y}_{m+1,j}(1), \widetilde{Y}_{m+1,j}(0), \widetilde{B}_{m+1,j})$ denote the $j$-th row of $\widetilde{W}_{m+1}$. (See Lemma~\ref{lemma1} for the detailed construction for $\widetilde{W}_{m+1}$). By Lemma~\ref{lemma1} (c), $\widetilde{V}_{m+1,1}$ is identically distributed as $V_{m+1,1}|(B_{m+1,1} \in \Omega_I)$. Since $\widetilde{\textup{C}}_{I,a}$ is a function of $\{W_i: i=1,\dots, m, A_i=a\}$, which are independent of $W_{m+1}$, Lemma~\ref{lemma1}(c) further implies
\begin{align*}
    P\bigg\{Y_{m+1,1}(a)\not\in \widetilde{\textup{C}}_{I,a}(B_{m+1,1})\bigg|B_{m+1,1} \in \Omega_C\bigg\} = P\left\{\widetilde{Y}_{m+1,1}(a)\not\in \widetilde{\textup{C}}_{I,a}(\widetilde{B}_{m+1,1})\right\}.
\end{align*}
This result yields our final goal to show $P\left\{\widetilde{Y}_{m+1,1}(a)\not\in \widetilde{\textup{C}}_{I,a}(\widetilde{B}_{m+1,1})\right\} \le \alpha$.

Next, we denote the calibration data $\mathcal{O}_{ca}(a) = \{\widetilde{W}_i: i \in \mathcal{I}_{ca}(a)\}$. Like $\widetilde{W}_{m+1}$, $\widetilde{W}_i$ contains those individuals in $W_i$ with $B_{ij} \in \Omega_I$. 
By Assumptions 1-2 and Lemma~\ref{lemma1}(a), $\{\widetilde{W}_i: \in \mathcal{I}\}$ are i.i.d., where $ \mathcal{I} =  \mathcal{I}_{ca}(a) \cup \{m+1\}$. Within each $\widetilde{W}_i$, Lemma~\ref{lemma1} (b) implies that $(\widetilde{V}_{i1}, \dots, \widetilde{V}_{iM_i})$ are exchangeable conditioning on $M_i$ within each cluster. 

We define a function 
\begin{align*}
    q_{1-\alpha}(\{\widetilde{W}_i, i\in \mathcal{I}\}) = \textup{Quantile}_{1-\alpha}\left(\frac{1}{|\mathcal{I}|}\sum_{i\in \mathcal{I}} \frac{1}{M_i}\sum_{j=1}^{M_i} \delta_{s(\widetilde{B}_{ij}, \widetilde{Y}_{ij}(a))}\right),
\end{align*}
which is the $(1-\alpha)$-quantile of the weighted empirical distribution for the calibration data and new cluster. By definition of the quantile function, we have, for any $i \in \mathcal{I}$ and $j \in \{1,\dots, M_i\}$,
\begin{align*}
    \frac{1}{|\mathcal{I}|}\sum_{i\in \mathcal{I}} \frac{1}{M_i}\sum_{j=1}^{M_i} I\left\{s(\widetilde{B}_{ij}, \widetilde{Y}_{ij}(a)) \le  q_{1-\alpha}(\{\widetilde{W}_i, i\in \mathcal{I}\})\right\} \ge 1-\alpha.
\end{align*}
{Consider any permutation $\sigma_i$ on $(1,\dots, M_{i})$}. Since we have shown that each $\widetilde{W}_i$ is exchangeable given $M_i$, we have, for each $i \in \mathcal{I}$,
\begin{align*}
    I\left\{s(\widetilde{B}_{ij}, \widetilde{Y}_{ij}(a)) \le  q_{1-\alpha}(\{\widetilde{W}_i, i\in \mathcal{I}\})\right\} \overset{d}{=} I\left\{s(\widetilde{B}_{i\sigma(j)}, \widetilde{Y}_{i\sigma(j)}(a)) \le  q_{1-\alpha}(\{\widetilde{W}_{i'}, i'\in \mathcal{I}, i'\ne i\} \cup \{\sigma(\widetilde{W}_i)\})\right\},
\end{align*}
conditioning on the training fold and $M_i$. 
Since the weighted empirical distribution is invariant to within-cluster permutations, we have $q_{1-\alpha}(\{\widetilde{W}_{i'}, i'\in \mathcal{I}, i'\ne i\} \cup \{\sigma(\widetilde{W}_i)\}) = q_{1-\alpha}(\{\widetilde{W}_i, \in \mathcal{I}\})$.
Combined with the fact that the training fold is independent of the calibration data and the test data, we have
\begin{align*}
    E\left[I\left\{s(\widetilde{B}_{ij}, \widetilde{Y}_{ij}(a)) \le  q_{1-\alpha}(\{\widetilde{W}_i, \in \mathcal{I}\})\right\}\Big|M_i\right] =  E\left[I\left\{s(\widetilde{B}_{i1}, \widetilde{Y}_{i1}(a)) \le  q_{1-\alpha}(\{\widetilde{W}_i, \in \mathcal{I}\})\right\}\Big|M_i\right] 
\end{align*}
by choosing permutations $\sigma$ with $\sigma(j) = 1$. After averaging over $i$ and marginalizing over $M_i$, we have
\begin{align*}
   E\left[I\left\{s(\widetilde{B}_{i1}, \widetilde{Y}_{i1}(a)) \le  q_{1-\alpha}(\{\widetilde{W}_i, \in \mathcal{I}\})\right\}\right] =  E\left[\frac{1}{M_i}\sum_{j=1}^{M_i}I\left\{s(\widetilde{B}_{ij}, \widetilde{Y}_{ij}(a)) \le  q_{1-\alpha}(\{\widetilde{W}_i, \in \mathcal{I}\})\right\}\right].
\end{align*}
Since we have shown that $\widetilde{W}_i$ are i.i.d., we also have
\begin{align*}
    & E\left[\frac{1}{M_i}\sum_{j=1}^{M_i}I\left\{s(\widetilde{B}_{ij}, \widetilde{Y}_{ij}(a)) \le  q_{1-\alpha}(\{\widetilde{W}_i, \in \mathcal{I}\})\right\}\right]\\ &= E\left[\frac{1}{|\mathcal{I}|} \sum_{i\in \mathcal{I}}\frac{1}{M_i}\sum_{j=1}^{M_i}I\left\{s(\widetilde{B}_{ij}, \widetilde{Y}_{ij}(a)) \le  q_{1-\alpha}(\{\widetilde{W}_i, \in \mathcal{I}\})\right\}\right].
\end{align*}
Taken together, we get
\begin{align*}
    & E\left[I\left\{s(\widetilde{B}_{m+1,1}, \widetilde{Y}_{m+1,1}(a)) \le  q_{1-\alpha}(\{\widetilde{W}_i, \in \mathcal{I}\})\right\}\right]\\ &= E\left[\frac{1}{|\mathcal{I}|} \sum_{i\in \mathcal{I}}\frac{1}{M_i}\sum_{j=1}^{M_i}I\left\{s(\widetilde{B}_{ij}, \widetilde{Y}_{ij}(a)) \le  q_{1-\alpha}(\{\widetilde{W}_i, \in \mathcal{I}\})\right\}\right] \\
    &\ge 1-\alpha.
\end{align*}
Finally, we need to connect $q_{1-\alpha}(\{\widetilde{W}_i, \in \mathcal{I}\})$ to $\widehat{q}_{1-\alpha}$ defined in Algorithm 3. Since replacing some point masses in the weighted empirical distribution by $\delta_{+\infty}$ does not decrease the $(1-\alpha)$-quantile, we have $q_{1-\alpha}(\{\widetilde{W}_i, \in \mathcal{I}\})\le\widehat{q}_{1-\alpha}$, which implies $P(s(\widetilde{B}_{m+1,1}, \widetilde{Y}_{m+1,1}(a)) \le \widehat{q}_{1-\alpha}) \ge 1-\alpha$. Since we defined $\widetilde{\textup{C}}_{I,a}(B) = \{y \in \mathbb{R}: |y-\widehat{f}_a(B)| \le \widehat{q}_{1-\alpha}(a)\}$, then the event $Y_{m+1,1} \in\widetilde{\textup{C}}_{I,a}(B_{m+1,1})$  is equal to $s(\widetilde{B}_{m+1,1}, \widetilde{Y}_{m+1,1}(a)) \le \widehat{q}_{1-\alpha}$, which implies $P\left\{\widetilde{Y}_{m+1,1}(a)\not\in \widetilde{\textup{C}}_{I,a}(\widetilde{B}_{m+1,1})\right\} \le \alpha$. This completes the proof.
\end{proof}

\section{Nested approaches to construct conformal intervals}\label{sec: nested}
\subsection{Cluster-level treatment effect}
\begin{algorithm}[H]
\setstretch{1}
\caption{\label{alg:2} Computing the conformal interval $\widetilde{\textup{C}}_C(\overline{B})$ for cluster-level treatment effects.}
\textbf{Input:} Cluster-level data $\{(\overline{Y}_i, A_i, \overline{B}_i):i=1,\dots, m\}$,  a test point $\overline{B}_{\textup{test}}$, a prediction model $f_a(\overline{B})$ for $\overline{Y}(a)$, $a\in \{0,1\},$ prediction models $\{m^L(\overline{B}), m^R(\overline{B})\}$ for the $(\alpha/2, 1-\alpha/2)$-quantiles of $\overline{Y}(1) -\overline{Y}(0)$,
a covariate-subgroup of interest $\Omega_C$, and levels $(\alpha,\gamma)$.

\vspace{5pt}
\textbf{Procedure:}  

\hspace{5pt} 1. For $a = 0,1$, randomly split the arm$-a$ covariate-subgroup data $\{(\overline{Y}_i, \overline{B}_i): i=1,\dots, m, A_i =a , \overline{B}_i \in \Omega_C\}$ into a training fold $\mathcal{O}_{tr}(a)$, and a calibration fold $\mathcal{O}_{ca}(a)$ with index set $\mathcal{I}_{ca}(a)$.

\hspace{5pt} 2. Use the training fold to run Step 1 of Algorithm~1 of the main paper and obtain the 
$(1-\alpha)$ conformal interval $\widetilde{\textup{C}}_{C,a}(\overline{B})$ for $\overline{Y}(a), a = 0,1$.

\hspace{5pt} 3. For each $i =1,\dots, m$, define
    $\overline{C}_i = A_i\{\overline{Y_i} - \widetilde{\textup{C}}_{C,0}(\overline{B}_i)\} + (1-A_i) \{\widetilde{\textup{C}}_{C,1}(\overline{B}_i) - \overline{Y_i}\}$
and denote $\overline{C}_i = [\overline{C}_i^L, \overline{C}_i^R]$. 




\hspace{5pt} 4. Train prediction models $m^L(\overline{B})$ for $\overline{C}^L$ and $m^R(\overline{B})$  for $\overline{C}^R$ using all data in the training fold $\{\mathcal{O}_{tr}(a): a=0,1\}$, and obtain the estimated models $\widehat{m}^L(\overline{B})$ and $\widehat{m}^R(\overline{B})$.

\hspace{5pt} 5. For each $i \in \mathcal{I}_{ca}(1) \cup \mathcal{I}_{ca}(0)$, compute the non-conformity score $$s^*(\overline{B}_i, \overline{C}_i) = \max\{\widehat{m}^L(\overline{B}_i) - \overline{C}_i^L,  \overline{C}_i^R - \widehat{m}^R(\overline{B}_i)\}.$$

\hspace{5pt} 6. Compute the $1-\gamma$ quantile $\widehat{q}^*_{1-\gamma}$ of the distribution $$\widehat{F}^*= \frac{1}{|\mathcal{I}_{ca}(1)|+|\mathcal{I}_{ca}(0)|+1}\left\{\sum_{i \in \mathcal{I}_{ca}(1)\cup \mathcal{I}_{ca}(0)}   \delta_{s^*(\overline{B}_i, \overline{C}_i)} + \delta_{+\infty}\right\}.$$
\vspace{5pt}
\textbf{Output:} $\widetilde{\textup{C}}_C(\overline{B}_{\textup{test}}) = [\widehat{m}^L(\overline{B}_{\textup{test}}) - \widehat{q}^*_{1-\gamma}, \widehat{m}^R(\overline{B}_{\textup{test}}) + \widehat{q}^*_{1-\gamma}]$.
\end{algorithm}

\begin{theorem}\label{prop:2}
Assume Assumptions 1-2 and that $(\overline{Y}_{\textup{test}}(1),\overline{Y}_{\textup{test}}(0),\overline{B}_{\textup{test}})$ is an independent from the distribution induced by $\mathcal{P}^W$. Then, the $\widetilde{\textup{C}}_{C}(\overline{B}_{\textup{test}})$ output by Algorithm~\ref{alg:2} satisfies
\begin{equation}\label{eq: prop2}
    P\bigg\{\overline{Y}_{\textup{test}}(1) - \overline{Y}_{\textup{test}}(0) \in \widetilde{\textup{C}}_C(\overline{B}_{\textup{test}})\bigg|\overline{B}_{\textup{test}} \in \Omega_C\bigg\} \ge 1-\alpha-\gamma
\end{equation}
for any set $\Omega_C$ in the support of $\overline{B}_{\textup{test}}$ with a positive measure. 
\end{theorem}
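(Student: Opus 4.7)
The plan is to mimic the standard nested (two-stage) conformal argument. I decompose the miscoverage of $\widetilde{\textup{C}}_C(\overline{B}_{\textup{test}})$ into two pieces: (i) the first-stage interval $\widetilde{\textup{C}}_{C,a}$ fails to cover the unobserved potential outcome at the test point, and (ii) the second-stage quantile $\widehat{q}^*_{1-\gamma}$ is too small to cover the test point's ``oracle'' contrast interval. Theorem~1 handles (i) at level $\alpha$, a split-conformal argument on the score $s^*$ handles (ii) at level $\gamma$, and a union bound then yields $1-\alpha-\gamma$.

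First I would define the oracle contrast interval at the test point by the same formula as in Step~3 of Algorithm~\ref{alg:2}:
\begin{equation*}
\overline{C}_{\textup{test}} = A_{\textup{test}}\{\overline{Y}_{\textup{test}} - \widetilde{\textup{C}}_{C,0}(\overline{B}_{\textup{test}})\} + (1-A_{\textup{test}})\{\widetilde{\textup{C}}_{C,1}(\overline{B}_{\textup{test}}) - \overline{Y}_{\textup{test}}\} = [\overline{C}_{\textup{test}}^L, \overline{C}_{\textup{test}}^R].
\end{equation*}
The one-line algebra that opens the proof of Theorem~1 shows that if $\overline{Y}_{\textup{test}}(1-A_{\textup{test}}) \in \widetilde{\textup{C}}_{C,1-A_{\textup{test}}}(\overline{B}_{\textup{test}})$, then $\overline{Y}_{\textup{test}}(1) - \overline{Y}_{\textup{test}}(0) \in \overline{C}_{\textup{test}}$. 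Because $\widetilde{\textup{C}}_{C,a}$ is produced from the training fold alone, independently of $\mathcal{O}_{ca}(a)$ and the test point, Theorem~1 applies verbatim and gives
\begin{equation*}
P\bigl(\overline{Y}_{\textup{test}}(1) - \overline{Y}_{\textup{test}}(0) \in \overline{C}_{\textup{test}} \mid \overline{B}_{\textup{test}} \in \Omega_C\bigr) \ge 1-\alpha.
\end{equation*}

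Next, I would unpack the score. By the definition of $s^*$, the inequality $s^*(\overline{B}_{\textup{test}}, \overline{C}_{\textup{test}}) \le \widehat{q}^*_{1-\gamma}$ is equivalent to $\overline{C}_{\textup{test}} \subseteq [\widehat{m}^L(\overline{B}_{\textup{test}}) - \widehat{q}^*_{1-\gamma},\ \widehat{m}^R(\overline{B}_{\textup{test}}) + \widehat{q}^*_{1-\gamma}] = \widetilde{\textup{C}}_C(\overline{B}_{\textup{test}})$. Conditional on the training fold (which produces $\widetilde{\textup{C}}_{C,a}$ as well as $\widehat{m}^L, \widehat{m}^R$), Assumptions~1 and 2 imply that $\{(\overline{Y}_i, A_i, \overline{B}_i) : i \in \mathcal{I}_{ca}(1)\cup\mathcal{I}_{ca}(0)\}$ together with $(\overline{Y}_{\textup{test}}, A_{\textup{test}}, \overline{B}_{\textup{test}})$ are i.i.d.\ given $\overline{B} \in \Omega_C$, so the derived scores are exchangeable. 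The standard split-conformal argument (inflating the empirical distribution by $\delta_{+\infty}$ to go from exchangeable scores to a one-sided guarantee) then yields
\begin{equation*}
P\bigl(s^*(\overline{B}_{\textup{test}}, \overline{C}_{\textup{test}}) \le \widehat{q}^*_{1-\gamma} \mid \overline{B}_{\textup{test}} \in \Omega_C\bigr) \ge 1-\gamma.
\end{equation*}

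A union bound over the two complementary events, combined with the implication chain $\overline{Y}_{\textup{test}}(1) - \overline{Y}_{\textup{test}}(0) \in \overline{C}_{\textup{test}} \subseteq \widetilde{\textup{C}}_C(\overline{B}_{\textup{test}})$, gives \eqref{eq: prop2}. The main obstacle I anticipate is the bookkeeping around the conditioning: both the first-stage coverage statement (imported from Theorem~1) and the second-stage exchangeability must remain valid after conditioning simultaneously on the training fold and on $\overline{B}_{\textup{test}} \in \Omega_C$. Assumption~2 (cluster randomization decouples $A_{\textup{test}}$ from potential outcomes and covariates) and the disjointness of the training fold from $\mathcal{O}_{ca}(a)$ and the test point are exactly what makes this clean.
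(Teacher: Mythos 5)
Your proposal is correct and follows essentially the same route as the paper's proof: define the test point's contrast interval $\overline{C}_{\textup{test}}$ via an auxiliary $A_{\textup{test}}$, get first-stage coverage at level $\alpha$ from Theorem~1 (the paper derives this from the arm-specific coverage plus Assumption~2, which is the same content), apply the standard split-conformal argument to the scores $s^*$ for the second-stage guarantee at level $\gamma$, and finish with the union bound over the two failure events.
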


\begin{proof}[Proof of Theorem~\ref{prop:2}.]
Following the proof of Theorem 1, we have $P\bigg\{\overline{Y}(a)\not\in \widetilde{\textup{C}}_{C,a}(\overline{B})\bigg|\overline{B} \in \Omega_C\bigg\} \le \alpha$ for $a = 0,1$. Since Assumption 2 implies that $A$ is independent of $(\overline{Y}(a), \overline{B})$, then
\begin{align*}
    &P\left\{\overline{Y}(1) - \overline{Y}(0) \in  A\{\overline{Y} - \widetilde{\textup{C}}_{C,0}(\overline{B})\} + (1-A) \{\widetilde{\textup{C}}_{C,1}(\overline{B}) - \overline{Y}\}\Big|\overline{B} \in \Omega_C\right\} \\
    &= \sum_{a=0}^1P\left\{\overline{Y}(a) \in  \widetilde{\textup{C}}_{C,a}(\overline{B}), A = a\Big| \overline{B} \in \Omega_C\right\} \\
    &= \sum_{a=0}^1P\left\{\overline{Y}(a) \in  \widetilde{\textup{C}}_{C,a}(\overline{B})\Big| \overline{B} \in \Omega_C\right\} P(A=a)\\
    &\ge 1-\alpha.
\end{align*}
Therefore, by defining $\overline{C}_i = A_i\{\overline{Y_i} - \widetilde{\textup{C}}_{C,0}(\overline{B}_i)\} + (1-A_i) \{\widetilde{\textup{C}}_{C,1}(\overline{B}_i) - \overline{Y_i}\}$, we have $\overline{C}_i = [\overline{C}_i^L, \overline{C}_i^R]$ representing the $(\alpha/2,1-\alpha/2)$-quantile of $\overline{Y}_i(1) - \overline{Y}_i(0)$. 

Next, we independently generate $A_{\textup{test}}$  from $\mathcal{P}^A$ and compute $\overline{C}_{\textup{test}} = A_{\textup{test}}\{\overline{Y}_{\textup{test}} - \widetilde{\textup{C}}_{C,0}(\overline{B}_{\textup{test}})\} + (1-A_{\textup{test}}) \{\widetilde{\textup{C}}_{C,1}(\overline{B}_{\textup{test}}) - \overline{Y}_{\textup{test}}\}$. As a result, $(\overline{B}_{\textup{test}}, \overline{C}_{\textup{test}})$ is independent and identically distributed as $(\overline{B}_i, \overline{C}_i)$.
We observe that $\widetilde{\textup{C}}_{C,a}, \widehat{m}^L, \widehat{m}^R$ are all functions of the training folds $\{\mathcal{O}_{tr}(a): a=0,1\}$, which are independent of the calibration data and test data. Denoting $\mathcal{I} = \mathcal{I}_{ca}(1) \cup \mathcal{I}_{ca}(0) \cup \{\textup{test}\}$, by Assumptions 1-2, $\{s^*(\overline{B}_i,\overline{C}_i), i \in \mathcal{I}\}$ are i.i.d. conditioning on the training folds and $\overline{B}_i \in \Omega_C$. Then, we have
\begin{align*}
    P\left\{s^*(\overline{B}_{\textup{test}},\overline{C}_{\textup{test}}) \le \textup{Quantile}_{1-\alpha}\left(\frac{1}{|\mathcal{I}| }\sum_{i\in \mathcal{I}_{ca}(1)\cup \mathcal{I}_{ca}(0)} \delta_{s^*(\overline{B}_i,\overline{C}_i)} + \delta_{+\infty}\right)\bigg| \overline{B} \in \Omega_C\right\} \ge 1-\gamma.
\end{align*}
Since we defined
\begin{align*}
    \textup{Quantile}_{1-\gamma}\left(\frac{1}{|\mathcal{I}| }\sum_{i \in \mathcal{I}_{ca}(a)} \delta_{s^*(\overline{B}_i,\overline{C}_i)} + \delta_{+\infty}\right) =  \textup{Quantile}_{1-\gamma}(\widehat{F}^*) = \widehat{q}^*_{1-\gamma},
\end{align*}
we obtain 
\begin{align*}
 P\{\overline{C}_{\textup{test}} \subset \widetilde{\textup{C}}_C(\overline{B}_{\textup{test}})|\overline{B} \in \Omega_C\} &= P\bigg\{\widehat{m}^L(\overline{B}_{\textup{test}})-\widehat{q}^*_{1-\gamma} \le C_{\textup{test}}^L, C_{\textup{test}}^R \le \widehat{m}^R(\overline{B}_{\textup{test}})+\widehat{q}^*_{1-\gamma}\bigg|\overline{B} \in \Omega_C\bigg\} \\
 &= P\{\max\{\widehat{m}^L(\overline{B}_{\textup{test}})-C_{\textup{test}}^L, C_{\textup{test}}^R - \widehat{m}^R(\overline{B}_{\textup{test}})\} \le \widehat{q}^*_{1-\gamma}|\overline{B} \in \Omega_C\} \\
 &= P(s^*(\overline{B}_{\textup{test}},\overline{C}_{\textup{test}}) \le \widehat{q}^*_{1-\gamma}|\overline{B} \in \Omega_C)\\
 &\ge 1-\gamma. 
\end{align*}
Combined with the fact that $P\{\overline{Y}_{\textup{test}}(1)-\overline{Y}_{\textup{test}}(0)\in C_{\textup{test}}|\overline{B} \in \Omega_C\} \ge 1-\alpha$, we have
\begin{align*}
    &P\{\overline{Y}_{\textup{test}}(1)-\overline{Y}_{\textup{test}}(0)\not\in \widetilde{\textup{C}}_C(\overline{B}_{\textup{test}})|\overline{B} \in \Omega_C\} \\
    &=P\{\overline{Y}_{\textup{test}}(1)-\overline{Y}_{\textup{test}}(0)\not\in \widetilde{\textup{C}}_C(\overline{B}_{\textup{test}}), \overline{C}_{\textup{test}} \subset \widetilde{\textup{C}}_C(\overline{B}_{\textup{test}})|\overline{B} \in \Omega_C\} \\
    &\quad + P\{\overline{Y}_{\textup{test}}(1)-\overline{Y}_{\textup{test}}(0)\not\in \widetilde{\textup{C}}_C(\overline{B}_{\textup{test}}), \overline{C}_{\textup{test}} \not\subset \widetilde{\textup{C}}_C(\overline{B}_{\textup{test}})|\overline{B} \in \Omega_C\} \\
    &\le P\{\overline{Y}_{\textup{test}}(1)-\overline{Y}_{\textup{test}}(0)\not\in \overline{C}_{\textup{test}}|\overline{B} \in \Omega_C\} + P\{\overline{C}_{\textup{test}} \not\subset \widetilde{\textup{C}}_C(\overline{B}_{\textup{test}})|\overline{B} \in \Omega_C\} \\
    &\le \alpha+\gamma,
\end{align*}
which completes the proof.

\end{proof}

\subsection{Individual-level treatment effect}
\begin{algorithm}[htb]
\caption{\label{alg:4} Computing the conformal interval $\widetilde{\textup{C}}_I(B)$ for indivdiual-level treatment effects.}
\textbf{Input:} 
Individual-level data $\{(Y_{ij}, A_i, B_{ij}):i=1,\dots,m; j = 1,\dots, N_i\}$,  a test point $B_{\textup{test}}$, a prediction model $f_a(B)$ for $Y(a)$, $a\in \{0,1\},$ prediction models $\{m^L(B), m^R(B)\}$ for the $(\alpha/2, 1-\alpha/2)$-quantiles of $Y(1) -Y(0)$, a covariate-subgroup of interest $\Omega_I$, and levels $(\alpha, \gamma)$.

\vspace{5pt}
\textbf{Procedure:}  

\hspace{5pt} 1. For $a =0,1$, randomly partition the arm$-a$ covariate-subgroup data $\{(Y_{ij}, A_i, B_{ij}):i=1,\dots,m; j = 1,\dots, N_i; A_i=a; B_{ij} \in \Omega_I\}$ into a training fold $\mathcal{O}_{tr}(a)$ and a calibration fold $\mathcal{O}_{ca}(a)$ with index set $\mathcal{I}_{ca}(a)$.

\hspace{5pt} 2. Use the training fold to run Step 1 of Algorithm~2 of the main paper and obtain the $(1-\alpha)$ conformal interval $\widetilde{\textup{C}}_{I,a}(B)$ for $Y(a), a = 0,1$.

\hspace{5pt} 3. For each $i =1,\dots, m$, define $C_{ij} = A_i\{Y_{ij} - \widetilde{\textup{C}}_{I,0}(B_{ij})\} + (1-A_i) \{\widetilde{\textup{C}}_{I,1}(B_{ij}) - Y_{ij}\}$
and denote $C_{ij} = [C_{ij}^L, C_{ij}^R]$. 




\hspace{5pt} 4. Train prediction models $m^L(B)$ for $C^L$ and $m^R(B)$  for $C^R$ using all data in the training fold $\{\mathcal{O}_{tr}(a): a=0,1\}$, and obtain the estimated models $\widehat{m}^L(B)$ and $\widehat{m}^R(B)$.

\hspace{5pt} 5. For each $(i,j) \in \mathcal{I}_{ca}(1) \cup \mathcal{I}_{ca}(0)$, compute the non-conformity score $$s^*(B_{ij}, C_{ij}) = \max\{\widehat{m}^L(B_{ij}) - C_{ij}^L,  C_{ij}^R - \widehat{m}^R(B_{ij})\}.$$

\hspace{5pt} 6. Compute the $1-\gamma$ quantile $\widehat{q}^*_{1-\gamma}$ of the distribution $$\widehat{F}^*= \frac{1}{|\mathcal{I}_{ca}(1)|+|\mathcal{I}_{ca}(0)|+1}\left\{ \sum_{i \in \mathcal{I}_{ca}(1)\cup \mathcal{I}_{ca}(0)}  \frac{1}{\sum_{j=1}^{N_i} I\{B_{ij} \in \Omega_I\}}\sum_{j=1}^{N_i} I\{B_{ij} \in \Omega_I\} \delta_{s^*(B_{ij}, C_{ij})} + \delta_{+\infty}\right\}.$$
\vspace{5pt}
\textbf{Output:} $\widetilde{\textup{C}}_C(B_{\textup{test}}) = [\widehat{m}^L(B_{\textup{test}}) - \widehat{q}^*_{1-\gamma}, \widehat{m}^R(B_{\textup{test}}) + \widehat{q}^*_{1-\gamma}]$.
\end{algorithm}

\begin{theorem}\label{thm2}
Assume Assumptions 1-3 and that $(Y_{\textup{test}}(1), Y_{\textup{test}}(0), B_{\textup{test}})$ is an arbitrary individual from a new cluster independently sampled from $\mathcal{P}^W$. Then, the $\widetilde{\textup{C}}_{I}(B_{\textup{test}})$ output by Algorithm~\ref{alg:4} satisfies
\begin{equation}\label{eq: thm2}
    P\bigg\{Y_{\textup{test}}(1) - Y_{\textup{test}}(0) \in \widetilde{\textup{C}}_I(B_{\textup{test}})\bigg|B_{\textup{test}} \in \Omega_I\bigg\} \ge 1-\alpha-\gamma
\end{equation}
for any set $\Omega_I$ in the support of $B_{\textup{test}}$ with a positive measure. 
\end{theorem}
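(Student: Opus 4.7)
The plan is to combine the two-stage strategy used in the proof of Theorem~\ref{prop:2} with the within-cluster exchangeability machinery developed in Lemma~\ref{lemma1} and deployed in the proof of Theorem 2. The target probability splits, via a union bound, into two pieces: inner coverage of the potential-outcome intervals (controlled by $\alpha$) and outer coverage of the nested treatment-effect interval over the ``pseudo-outcome'' $C$ (controlled by $\gamma$).

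First, I would invoke Theorem 2 applied to the Step 2 conformal intervals to obtain $P\{Y_{\textup{test}}(a)\in \widetilde{\textup{C}}_{I,a}(B_{\textup{test}})\mid B_{\textup{test}}\in \Omega_I\}\ge 1-\alpha$ for $a=0,1$. Then, mirroring the argument in Theorem~\ref{prop:2}, I would independently draw $A_{\textup{test}}$ from $\mathcal{P}^A$, define
$$C_{\textup{test}} = A_{\textup{test}}\{Y_{\textup{test}} - \widetilde{\textup{C}}_{I,0}(B_{\textup{test}})\} + (1-A_{\textup{test}})\{\widetilde{\textup{C}}_{I,1}(B_{\textup{test}}) - Y_{\textup{test}}\},$$
and use Assumption 2 (independence of $A$ and $(Y(a),B)$) to conclude $P\{Y_{\textup{test}}(1)-Y_{\textup{test}}(0)\in C_{\textup{test}}\mid B_{\textup{test}}\in\Omega_I\}\ge 1-\alpha$, with $(B_{\textup{test}},C_{\textup{test}})$ identically distributed as $(B_{ij},C_{ij})$ for the calibration units.

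Second, for the outer step I would show $P\{C_{\textup{test}}\subset \widetilde{\textup{C}}_I(B_{\textup{test}})\mid B_{\textup{test}}\in \Omega_I\}\ge 1-\gamma$ by transplanting the proof of Theorem 2. Namely, introduce a fresh cluster $W_{m+1}$, apply the filter of Lemma~\ref{lemma1} to get $\widetilde{W}_{m+1}$, and use Lemma~\ref{lemma1}(c) to replace the conditional probability over $B_{\textup{test}}\in\Omega_I$ with an unconditional probability over the filtered variables $\widetilde{V}_{m+1,1}$. Because $\widetilde{\textup{C}}_{I,0},\widetilde{\textup{C}}_{I,1},\widehat{m}^L,\widehat{m}^R$ are all functions of the training folds and hence independent of the calibration and test data, the non-conformity scores $s^*(\widetilde{B}_{ij},\widetilde{C}_{ij})$ inherit the i.i.d.\ structure across clusters from Lemma~\ref{lemma1}(a) and the within-cluster exchangeability from Lemma~\ref{lemma1}(b). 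The same weighted-empirical-quantile argument as in Theorem 2 — choosing within-cluster permutations that move index $j$ to position $1$, averaging indicators over $j$ with weight $1/M_i$, then marginalizing over $M_i$ and over $i$ — together with the $\delta_{+\infty}$ inflation, yields the desired $1-\gamma$ bound.

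Finally, a union bound as at the end of the proof of Theorem~\ref{prop:2} gives
$$P\{Y_{\textup{test}}(1)-Y_{\textup{test}}(0)\notin \widetilde{\textup{C}}_I(B_{\textup{test}})\mid B_{\textup{test}}\in\Omega_I\} \le \alpha+\gamma.$$
The main obstacle is the second step: one must verify that $C_{ij}=[C_{ij}^L,C_{ij}^R]$, although built from $Y_{ij}$, $A_i$, and training-fold models, genuinely preserves the hypotheses of Lemma~\ref{lemma1} after filtering by $B_{ij}\in\Omega_I$. This is where the details matter — $A_i$ is cluster-constant (so it does not break within-cluster exchangeability) and the fitted functions depend only on the training fold (so they can be conditioned on as fixed). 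Once this is justified, the argument proceeds exactly in parallel to Theorem 2 followed by Theorem~\ref{prop:2}.
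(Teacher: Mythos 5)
Your proposal matches the paper's own proof in both structure and substance: inner coverage of $\widetilde{\textup{C}}_{I,a}$ at level $1-\alpha$ via the argument of Theorem 2, construction of the pseudo-interval $C$ using an independently drawn treatment and Assumption 2, outer conformal calibration of $s^*$ at level $1-\gamma$ via the weighted hierarchical-exchangeability quantile argument with the $\delta_{+\infty}$ inflation, a union bound, and Lemma~\ref{lemma1}(c) to pass between the filtered variables and the conditional statement given $B_{\textup{test}} \in \Omega_I$. The verification you flag at the end (that $A_i$ is cluster-constant and the fitted models are training-fold measurable, so the $C_{ij}$ inherit the exchangeability structure) is exactly what the paper relies on implicitly, so the proposal is correct and essentially identical to the paper's route.
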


\begin{proof}[Proof of Theorem~\ref{thm2}.]
    Following the proof of Theorem 2, we have $P(\widetilde{Y}(a) \in \widetilde{\textup{C}}_{I,a}(\widetilde{B})) \ge 1-\alpha$, where $\widetilde{Y}(a)$ and $\widetilde{B}$ represent the potential outcome $Y(a)$ and covariates $B$ given $B \in \Omega_I$. Since Assumption 2 implies that $A$ is independent of $(\widetilde{Y}(a), \widetilde{B}))$, then
    \begin{align*}
    &P\left\{\widetilde{Y}(1) - \widetilde{Y}(0) \in  A\{Y - \widetilde{\textup{C}}_{C,0}(\widetilde{B})\} + (1-A) \{\widetilde{\textup{C}}_{C,1}(\widetilde{B}) - \widetilde{Y}\}\right\} \\
    &= \sum_{a=0}^1P\left\{\widetilde{Y}(a) \in  \widetilde{\textup{C}}_{C,a}(\widetilde{B}), A = a\right\} \\
    &= \sum_{a=0}^1P\left\{\widetilde{Y}(a) \in  \widetilde{\textup{C}}_{C,a}(\widetilde{B})\right\} P(A=a)\\
    &\ge 1-\alpha.
\end{align*}
Therefore, by defining $\widetilde{C}_{ij} = A_i\{\widetilde{Y}_{ij} - \widetilde{\textup{C}}_{C,0}(\widetilde{B}_{ij})\} + (1-A_i) \{\widetilde{\textup{C}}_{C,1}(\widetilde{B}_{ij}) - \widetilde{Y}_{ij}\}$, we have $\widetilde{C}_{ij} = [\widetilde{C}_{ij}^L, \widetilde{C}_{ij}^R]$ representing the $(\alpha/2,1-\alpha/2)$-quantile of $\widetilde{Y}_{ij}(1) - \widetilde{Y}_{ij}(0)$. 

For the new cluster, in addition to sampling potential outcomes and covariates from $\mathcal{P}^{W}$, we independently generate $A_{m+1}$ from $\mathcal{P}^A$, and compute $\widetilde{C}_{m+1,j} = A_{m+1} \{Y_{m+1,j} - \widetilde{\textup{C}}_{C,0}(B_{m+1,j})\} + (1-A_{m+1})\{\widetilde{\textup{C}}_{C,1}(\widetilde{B}_{m+1,j}) - \widetilde{Y}_{m+1,j}\}$. By construction, $(\widetilde{B}_{m+1,j}, \widetilde{C}_{m+1,j})$ is independently and identically distributed as $(\widetilde{B}_{ij}, \widetilde{C}_{ij})$. We observe that $\widetilde{\textup{C}}_{C,a}, \widehat{m}^L, \widehat{m}^R$ are all functions of the  training folds $\{\mathcal{O}_{tr}(a): a=0,1\}$, which are independent of the calibration data and test data. Denoting $\mathcal{I} = \mathcal{I}_{ca}(1) \cup \mathcal{I}_{ca}(0) \cup \{\textup{test}\}$, by Assumptions 1-2, $\{s^*(\widetilde{B}_i,\widetilde{C}_i), i \in \mathcal{I}\}$ are i.i.d. conditioning on the training folds and $\widetilde{B}_i \in \Omega_C$. Then, we have
\begin{align*}
    P\left\{s^*(\widetilde{B}_{m+1,1}, \widetilde{C}_{m+1,1}) \le \textup{Quantile}_{1-\gamma}\left(\frac{1}{|\mathcal{I}|} \sum_{i\in \mathcal{I}_{ca}(1)\cup \mathcal{I}_{ca}(0)} \frac{1}{M_i}\sum_{j=1}^{M_i} \delta_{s^*(\widetilde{B}_{ij}, \widetilde{C}_{ij})} + \delta_{+\infty}\right)\right\} \ge 1-\gamma.
\end{align*}
By the definition of $\widehat{F}^*$, we have
\begin{align*}
   \textup{Quantile}_{1-\gamma}\left(\frac{1}{|\mathcal{I}|} \sum_{i\in \mathcal{I}_{ca}(1)\cup \mathcal{I}_{ca}(0)} \frac{1}{M_i}\sum_{j=1}^{M_i} \delta_{s^*(\widetilde{B}_{ij}, \widetilde{C}_{ij})} + \delta_{+\infty}\right) =  \textup{Quantile}_{1-\gamma}(\widehat{F}^*) = \widehat{q}^*_{1-\gamma},
\end{align*}
we obtain 
\begin{align*}
 P\{\widetilde{C}_{{m+1,1}} \subset \widetilde{\textup{C}}_C(\widetilde{B}_{{m+1,1}})\} &= P\bigg\{\widehat{m}^L(\widetilde{B}_{{m+1,1}})-\widehat{q}^*_{1-\gamma} \le C_{{m+1,1}}^L, C_{{m+1,1}}^R \le \widehat{m}^R(\widetilde{B}_{{m+1,1}})+\widehat{q}^*_{1-\gamma}\bigg\} \\
 &= P\{\max\{\widehat{m}^L(\widetilde{B}_{{m+1,1}})-C_{{m+1,1}}^L, C_{{m+1,1}}^R - \widehat{m}^R(\widetilde{B}_{{m+1,1}})\} \le \widehat{q}^*_{1-\gamma}\} \\
 &= P(s^*(\widetilde{B}_{{m+1,1}},\widetilde{C}_{{m+1,1}}) \le \widehat{q}^*_{1-\gamma})\\
 &\ge 1-\gamma. 
\end{align*}
Combined with the fact that $P\{\widetilde{Y}_{{m+1,1}}(1)-\widetilde{Y}_{{m+1,1}}(0)\in C_{{m+1,1}}\} \ge 1-\gamma$, we have
\begin{align*}
    &P\{\widetilde{Y}_{{m+1,1}}(1)-\widetilde{Y}_{{m+1,1}}(0)\not\in \widetilde{\textup{C}}_C(\widetilde{B}_{{m+1,1}})\} \\
    &=P\{\widetilde{Y}_{{m+1,1}}(1)-\widetilde{Y}_{{m+1,1}}(0)\not\in \widetilde{\textup{C}}_C(\widetilde{B}_{{m+1,1}}), \widetilde{C}_{{m+1,1}} \subset \widetilde{\textup{C}}_C(\widetilde{B}_{{m+1,1}})\} \\
    &\quad + P\{\widetilde{Y}_{{m+1,1}}(1)-\widetilde{Y}_{{m+1,1}}(0)\not\in \widetilde{\textup{C}}_C(\widetilde{B}_{{m+1,1}}), \widetilde{C}_{{m+1,1}} \not\subset \widetilde{\textup{C}}_C(\widetilde{B}_{{m+1,1}})\} \\
    &\le P\{\widetilde{Y}_{{m+1,1}}(1)-\widetilde{Y}_{{m+1,1}}(0)\not\in \widetilde{C}_{{m+1,1}}\} + P\{\widetilde{C}_{{m+1,1}} \not\subset \widetilde{\textup{C}}_C(\widetilde{B}_{{m+1,1}})\} \\
    &\le \alpha + \gamma.
\end{align*}
Finally, by Lemma~\ref{lemma1} (c), $P\{\widetilde{Y}_{{m+1,1}}(1)-\widetilde{Y}_{{m+1,1}}(0)\not\in \widetilde{\textup{C}}_C(\widetilde{B}_{{m+1,1}})\} = P\{Y_{\textup{test}}(1)-Y_{\textup{test}}(0)\not\in \widetilde{\textup{C}}_C(B_{\textup{test}})|B_{\textup{test}} \in \Omega_I\}$, which completes the proof.
\end{proof}

\section{Additional numerical results}\label{sec: additional-simulations}

\subsection{Simulations with 30 clusters}

When $m=30$, the simulation setting is the same as Section 5.1 of the main paper. We provide the simulation results in Table~\ref{tab:sim}. Due to limited clusters, we do not implement the ``B-nested'' method (which requires two sample splittings) or perform the covariate-subgroup analysis for the cluster-level treatment effect (which drops 40\% of all clusters).  For the remaining scenarios, the ``O'' and ``B-direct'' methods can still achieve the target coverage probability, confirming their finite-sample validity. 
However, they become more conservative compared to $m=100$, as reflected by the increased length of intervals. This is expected given limited clusters used in model training and quantile estimation. 
An ad-hoc solution is to consider an 80\% coverage probability, in which case the resulting conformal intervals have a reduced length and are likely more informative. Finally, it is worth highlighting that, even with few clusters, we are able to perform subgroup analysis for the individual-level treatment effect, and its empirical performance is similar to the marginal analysis that uses all data. This observation confirms the statistical benefit of conformal causal inference for individual-level treatment effects in small CRTs. 

\begin{table}[htbp]
\caption{Summary of simulation results 
for 80\% and 90\% conformal intervals 
given $m=30$ clusters. 
For both coverage probability and length of intervals, we report their averages and standard errors. The local treatment effect is conditioned on $|X_{ij1}| < 0.5$. }\label{tab:sim}
\resizebox{\textwidth}{!}{
\begin{tabular}{rrrrrr}
  \hline
\multirow{3}{*}{Treatment effects}  & \multirow{3}{*}{Methods} & \multicolumn{2}{c}{$\alpha=0.2$} & \multicolumn{2}{c}{$\alpha=0.1$}\\
  \cmidrule(lr){3-4} \cmidrule(lr){5-6}
 & & \shortstack[r]{Coverage\\ probability} & \shortstack[r]{Length of\\ intervals}  & \shortstack[r]{Coverage\\ probability} & \shortstack[r]{Length of\\ intervals} \\ 
  \hline
\multirow{2}{*}{Marginal cluster-level} &O & 0.816(0.094) & 3.119(1.465) & 0.912(0.072) & 4.211(2.230) \\ 
 & B-direct & 0.998(0.008) & 5.891(2.833)& 1.000(0.002) & 7.970(4.447) \\ 
  \hline
\multirow{2}{*}{Marginal individual-level}  & O & 0.868(0.035) & 4.311(0.489) & 0.977(0.014) & 6.637(0.794) \\ 
 & B-direct & 0.989(0.007) & 8.345(0.768) & 1.000(0.000) & 12.905(1.232) \\ 
  \hline
\multirow{2}{*}{Local individual-level} & O & 0.870(0.048) & 4.337(0.675) & 0.977(0.021) & 6.933(1.260) \\ 
 & B-direct & 0.988(0.009) & 8.398(1.138) & 1.000(0.000) & 13.542(2.114) \\ 
   \hline
\end{tabular}
}
\end{table}

\subsection{Simulations with 500 clusters}

The simulations in Section of the main paper are summarized in Figures~\ref{fig: 500-cluster} and \ref{fig: 500-individual} below.

\begin{figure}[htbp]
    \centering
    \includegraphics[width=0.95\textwidth]{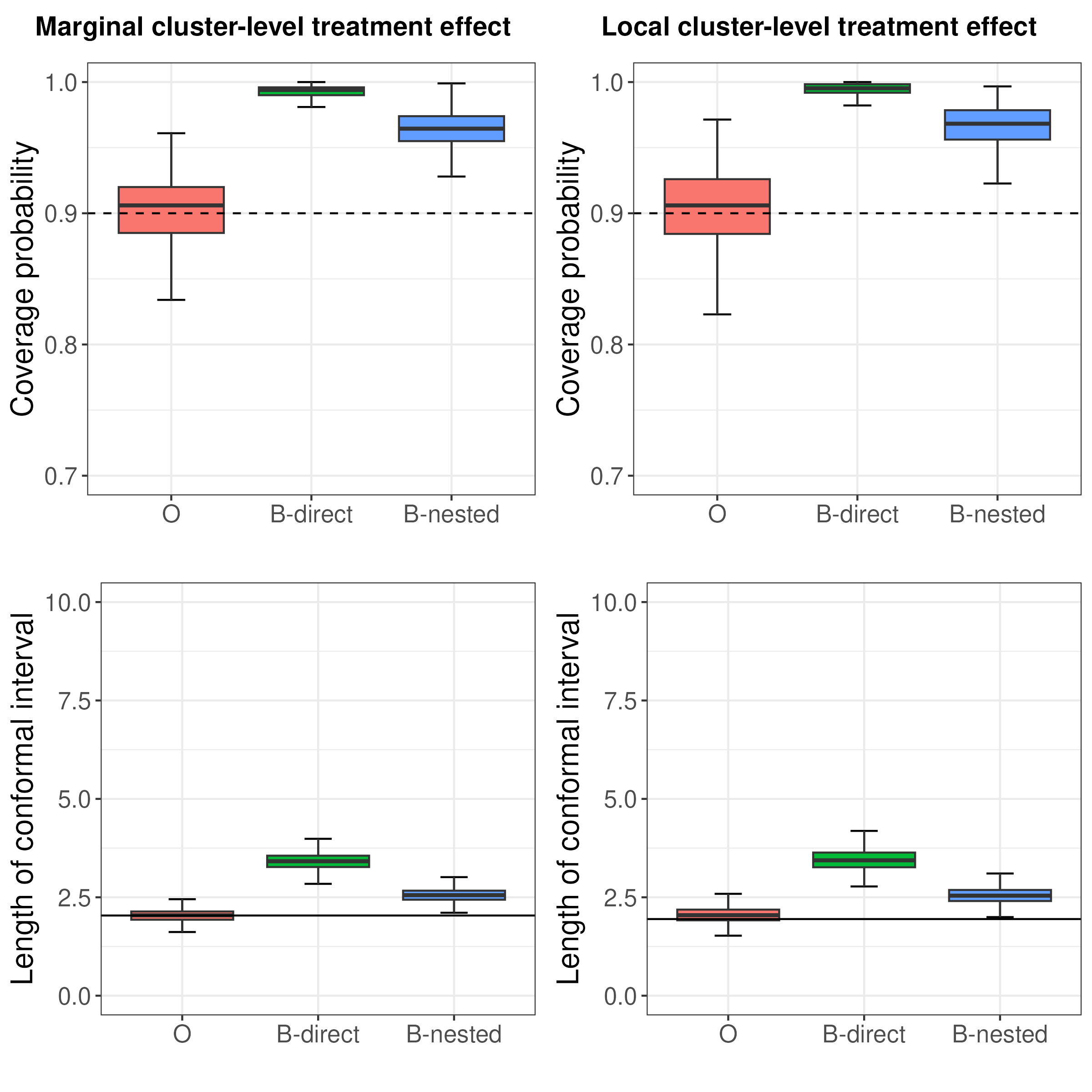} 
    \caption{Simulation results (boxplot) for the marginal (left column) and local (right column, conditioning on $\{R_{i1}\ge 2, R_{i2}=1\}$) cluster-level treatment effects with $m=500$. In the upper panels, the dashed line is the target 90\% coverage probability. In the lower panels, the solid line is the oracle length of conformal intervals, computed as the average length between the $(\alpha/2,1-\alpha/2)$-quantiles of $\overline{Y}(1)-\overline{Y}(0)$ among test data.}
    \label{fig: 500-cluster}
\end{figure}

\begin{figure}[htbp]
    \centering
    \includegraphics[width=0.95\textwidth]{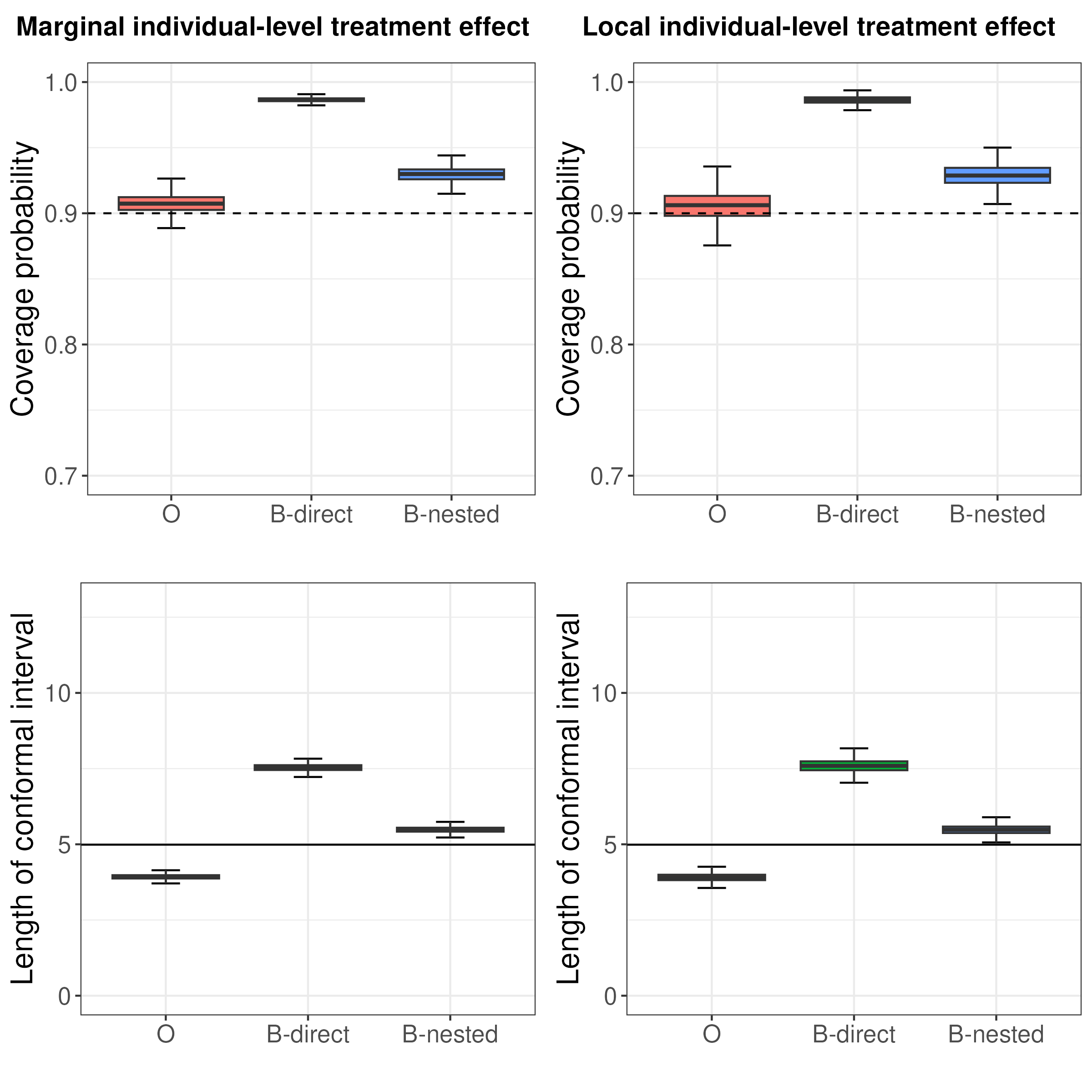} 
    \caption{Simulation results (boxplot) for the marginal (left column) and local (right column, conditioning on $\{|X_{ij1}|<0.5\}$) individual-level treatment effects with $m=500$. In the upper panels, the dashed line is the target 90\% coverage probability. In the lower panels, the solid line is the oracle length of conformal intervals, computed as the average length between the $(\alpha/2,1-\alpha/2)$-quantiles of $Y(1)-Y(0)$ among test data.}
    \label{fig: 500-individual}
\end{figure}

\subsection{Simulations with linear regression as the prediction model}

The simulation results with linear regression as the prediction model is given in Table~\ref{fig: sim-linear-regression}. We compare it with Table~\ref{tab:sim}, which uses linear regression and random forest as the prediction model. We observe a short length of conformal intervals when using random forest.

\begin{table}[htbp]
\caption{Summary of simulation results 
for 80\% and 90\% conformal intervals 
given $m=30$ clusters with linear regression as the prediction model. 
For both coverage probability and length of intervals, we report their averages and standard errors. The local treatment effect is conditioned on $|X_{ij1}| < 0.5$. } \label{fig: sim-linear-regression}
\centering
\resizebox{\textwidth}{!}{
\begin{tabular}{rrrrrr}
  \hline
\multirow{3}{*}{Treatment effects}  & \multirow{3}{*}{Methods} & \multicolumn{2}{c}{$\alpha=0.2$} & \multicolumn{2}{c}{$\alpha=0.1$}\\
  \cmidrule(lr){3-4} \cmidrule(lr){5-6}
 & & \shortstack[r]{Coverage\\ probability} & \shortstack[r]{Length of\\ intervals}  & \shortstack[r]{Coverage\\ probability} & \shortstack[r]{Length of\\ intervals} \\ 
  \hline
\multirow{2}{*}{Marginal cluster-level} &O & 0.821(0.095) & 5.149(4.266) & 0.915(0.071) & 7.457(8.898) \\ 
 & B-direct & 0.999(0.007) & 10.310(8.526)& 1.000(0.001) & 14.910(8.017) \\ 
  \hline
\multirow{2}{*}{Marginal individual-level}  & O & 0.865(0.044) & 4.966(1.114) & 0.972(0.020) & 7.636(1.636) \\ 
 & B-direct & 0.994(0.006) & 9.706(2.076) & 1.000(0.000) & 15.027(3.010) \\ 
  \hline
\multirow{2}{*}{Local individual-level} & O & 0.861(0.057) & 4.722(1.067) & 0.974(0.025) & 7.556(1.725) \\ 
 & B-direct & 0.991(0.009) & 9.229(1.877) & 1.000(0.000) & 14.700(2.893) \\ 
   \hline
\end{tabular}
}
\end{table}

\subsection{Simulation for cluster treatment effect based on average of individual-level prediction models}

As mentioned in Section 3.1 of the main paper, the prediction model in Algorihtm 1 can be replaced by the average of individual-level predictions. Here,  we compare this approach with our original proposal based on cluster-level data in our simulations, and Figure~\ref{fig:sim-comparison} shows that this new approach can slightly narrow the length of conformal intervals on average while maintaining validity. 
    \begin{figure}
        \centering
\includegraphics[width=0.9\linewidth]{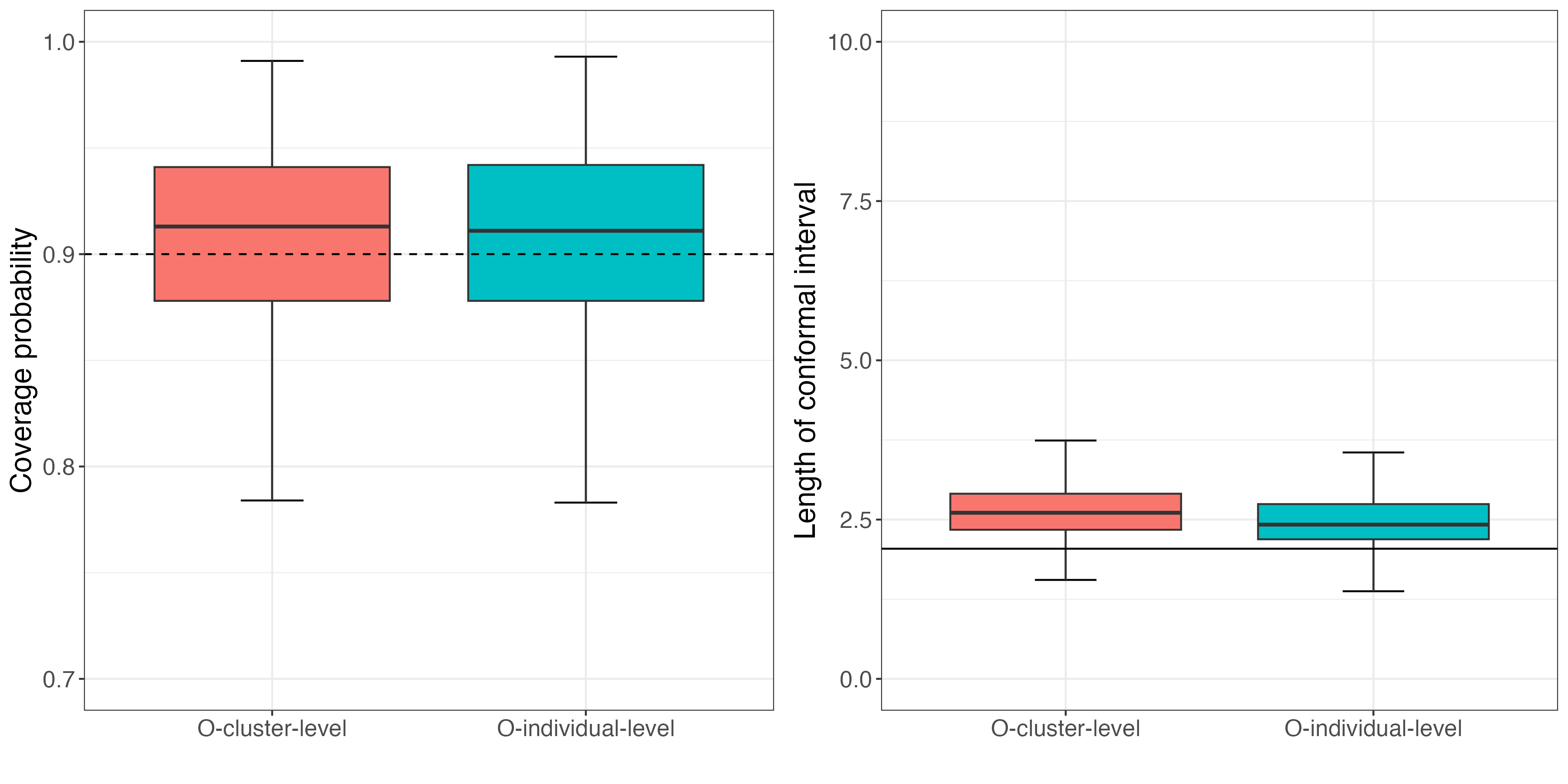}
        \caption{Comparison of prediction models based on cluster-level data (``O-cluster-level'') versus individual-level data (``O-individual-level'') in terms of coverage probability (left panel) and length of conformal intervals (right panel).}
        \label{fig:sim-comparison}
    \end{figure}

\subsection{Data application for local treatment effects}

In addition to the marginal analysis in the main paper, we performed subgroup analyses on individuals with severe baseline pain (baseline PEGS score equal to or larger than 7, 40\% participants) and moderate baseline pain (baseline PEGS score between 4 and 7, 50\% participants). We summarize the results in Table~\ref{data-application2}, which shows no difference in the length of intervals or the fraction of negatives between the two subgroups. Overall, these analyses indicate treatment benefits for a small subset of the clusters/individuals, which drive the overall treatment effect signal provided in previous analyses \citep{debar2022primary,wang2023CRT}. 

\begin{table}[htbp]
\caption{Summary results of data application for individual-level treatment effects on covariate subgroups. For both the length of intervals and the fraction of negatives, we present the average and standard error over 100 runs.}\label{data-application2} 
\centering
\resizebox{\textwidth}{!}{
\begin{tabular}{ccccc}
  \hline
\multirow{2}[3]{*}{\shortstack{Coverage\\ probability}}  & \multicolumn{2}{c}{Subgroup: severe baseline pain} & \multicolumn{2}{c}{Subgroup: moderate baseline pain}\\
  \cmidrule(lr){2-3} \cmidrule(lr){4-5}
 & Length of intervals & Fraction of negatives & Length of intervals & Fraction of negatives \\ 
  \hline
90\% & 6.538(0.571) & 0.072(0.042) & 6.508(0.538) & 0.076(0.042) \\ 
  80\% & 4.982(0.465) & 0.120(0.051)  & 4.928(0.417) & 0.132(0.053) \\ 
  70\% & 3.935(0.353) & 0.180(0.064) & 3.960(0.349) & 0.185(0.068) \\ 
  60\% & 3.180(0.262) & 0.243(0.067) & 3.221(0.296) & 0.241(0.079) \\ 
   \hline
\end{tabular}
}
\end{table}

\section{Extensions to cluster observational studies}\label{sec: cluster-observational-studies}
      We focus on clustered observational studies with a cluster-level treatment assignment, and extend our development to accommodate a non-randomized treatment assignment mechanism. In clustered observational studies, just like the CRT setting, individual observations within a cluster are typically correlated. Denoting $[m]$ as the set $\{1,\dots,m\}$, the ``strong ignorability" condition in this context can be expressed as follows: for any cluster $i\in [m]$, the set $\{(Y_{i,1}(0),Y_{i,1}(1)),\cdots, (Y_{i,N_i}(0),Y_{i,N_i}(1))\}$ is independent of $A_{i}$ given $\textbf{B}_i:=(X_{i,1},\cdots,X_{i,N_i},R_i).$ Here, $R_i$ denotes the cluster-level covariates and $X_{i,j}, j\in [N_i]$ represent individual-level covariates.)

The joint distribution of $\textbf{Y(0)}, \textbf{B}$ among the treated, is $P_{\textbf{Y(0)}, \textbf{B}|A=1}=P_{\textbf{Y(0)}|\textbf{B}}\cdot P_{\textbf{B}|A=1} $, where $\textbf{Y(0)}:=(Y_{\bullet,1}(0),\cdots, Y_{\bullet,N_{\bullet}}(0)),$ and $\textbf{B}:=(X_{\bullet,1},\cdots,X_{\bullet,N_{\bullet}},R_{\bullet}).$
For the data $\textbf{Y(0)}$ already observed in the current study, the data are collected under the distribution $P_{\textbf{Y(0)}, \textbf{B}|A=0}=P_{\textbf{Y(0)}|\textbf{B}}\cdot P_{\textbf{B}|A=0} $. Consequently, there is a covariate shift in the target distribution.

We next describe our inferential targets. Suppose we have newly observed covariates \( \textbf{B}_{test} \) in the treatment group (\( A = 1 \)). At the cluster level, with a pre-specified level $1-\alpha$, our objective is to construct a prediction interval \(\widetilde{C}(\overline{B}_{test})\) that contains the unobserved potential outcome \(\overline{Y}_{test}(0)\) with probability $1-\alpha$, 
i.e., $$P\{\overline{Y}_{test}(0) \in \widetilde{C}(\overline{B}_{test})|A=1\} \ge 1-\alpha.$$ 
If $\overline{Y}(1)$ is directly observed, the prediction interval for the treatment effect is then $\overline{Y}(1)-\widetilde{C}(\overline{B}_{test}).$
On the other hand, if \(\overline{Y}(1)\) is not observed, we can still construct a prediction interval for \(\overline{Y}(1) - \overline{Y}(0)\) by leveraging the fact that \(\overline{Y}(1)\) is sampled from the same distribution as the target distribution (i.e., there is no covariate shift). We apply the method described in the main text to construct a prediction interval \(\widetilde{C}_{C,1}(\overline{B}_{test})\) with level \(1-\alpha\) for \(\overline{Y}(1)\). We then combine this prediction interval with the one for \(\overline{Y}(0)\) (also at level \(1-\alpha\)) to achieve a coverage guarantee of \(1-2\alpha\) for \(\overline{Y}(1) - \overline{Y}(0)\), following a similar approach to that outlined in equation (6) of the main text.

Similarly, at the individual level, we aim to construct a prediction interval $\widetilde{C}(B_{test})$ for \( Y(0) \), the unobserved individual-level potential outcome, with coverage probability $1-\alpha$, i.e., $$P(Y(0) \in \widetilde{C}(B_{test})|A=1)\ge 1-\alpha.$$ If \(Y(0)\) is directly observed, the prediction interval for the treatment effect is given by \(Y(1) - \widetilde{C}({B_{test}})\). If \(Y(1)\) is not directly observed, the method for constructing a prediction interval with level \(1-2\alpha\) is nearly identical to the approach used at the cluster level. The key difference is that we construct an individual prediction interval \(\widetilde{C}_{I,1}(B_{\text{test}})\) with level \(1-\alpha\) for \(Y(1)\) using the approach described in the main text.


Next, we use the technique mentioned in \cite{tibshirani2019conformal} to construct prediction intervals for $\overline{Y}(0)$ and $Y(0)$, to achieve the aforementioned inferential targets. 

    \begin{itemize}
        \item At the cluster level, since each cluster is sampled i.i.d. from the cluster super-population, the technique for constructing the prediction for $\overline{Y}(0)$ is almost identical with the case studied in \cite{tibshirani2019conformal}, except that we replace the individual-level covariates in \citet{tibshirani2019conformal} with covariates from the whole cluster $\textbf{B}$ in constructing the weight function (described below).  Hence, when covariate shift exists, we can directly follow the method proposed in \cite{tibshirani2019conformal} by constructing the prediction set as 
        \begin{align*}
        \hat C_m(\overline{B})=\bigg\{y\in \mathbb{R}: S{(\overline{B},y)}\le Q_{1-\alpha}\Big(\sum_{i=1}^m p_i^{w}(\textbf{B}_i)\delta_{S{(\overline{B}_i,y_i)}}+p^{w}_{m+1}(\textbf{B})\delta_{+\infty}\Big)\bigg\}.
        \end{align*}
       Here we let 
       \begin{align*}
        p_i^{w}(\textbf{B}_i)=\frac{w(\textbf{B}_i)}{\sum_{j=1}^m w(\textbf{B}_j)+w(\textbf{B})},~~~\text{and}~~~ p_{m+1}^w=\frac{w(\textbf{B})}{\sum_{j=1}^m w(\textbf{B}_j)+w(\textbf{B})},  
       \end{align*}
       where the weight function $w(\textbf{B}):=d{P}_{\textbf{B}|A=1}(\textbf{B})/dP_{\textbf{B}|A=0}(\textbf{B}).$ Note that we utilize \(\textbf{B}\) in the weight function instead of \(\overline{B}\) because the aforementioned strong ignorability assumption is conditioned on \(\textbf{B}\) rather than on \(\overline{B}\). Additionally, $S{(\overline{B}_i,y_i)},i\in[m+1],$ is the non-conformity score of the $i$-th cluster with $A=0.$ Notice that here there is a slight abuse of notation as we let $y_i$ be the response that we are interested in. For example, in the above, $y_i $ can be $\overline{Y}_i(0).$ Additionally, without loss of generality, we also assume the non-conformity score function $S(\cdot,\cdot)$ is pre-trained via sample splitting.
       \item 
At the individual level, constructing prediction sets becomes more challenging due to the violation of the original exchangeability assumption under this clustered observational data structure. To address this issue, we extend the generalized conformal prediction method from \cite{tibshirani2019conformal} to hierarchically structured data. In this discussion, we focus on the case where all clusters have equal cluster sizes, denoted by \( M > 0 \). We will address the case of clusters with random sizes in future work.

    Suppose that the first $n$ clusters are involved in the trial data with treatment $A=0$ and the $(n+1)$-th one is a new cluster with treatment $A=1$. We denote $C_i=(s_{i1},\cdots,s_{iM}), i\in [n+1]$ as a collection of non-conformity scores (associated with $Y(0)$ and $B$) in the $i$-th cluster and $g(s_{i1},\cdots s_{iM})$ as the joint density function of $s_{i1},\cdots s_{iM}|A=0$. Then,  we can write the joint distribution of non-conformity scores among these $n+1$ clusters to be $$f(C_1,\cdots C_{n+1})=\Pi_{i=1}^{n+1}h_i(\mathbf{B}_i)g(s_{i1},\cdots,s_{iM}),$$ where $h_i(\mathbf{B}_i) = 1$ for $i\in[n]$ and $h_{n+1}(\mathbf{B}_{n+1})= w(\mathbf{B}_{n+1})=d{P}_{\mathbf{B}|A=1}(\mathbf{B}_{n+1})/dP_{\mathbf{B}|A=0}(\mathbf{B}_{n+1})$. 
    Here, because clusters are independent of each other, we are able to decompose $f(C_1,\cdots C_{n+1})$ into products of cluster-specific densities. Since the first $n$ clusters are collected under $A_i=0$, no reweighting is needed, which makes $h_i(\mathbf{B}_i) = 1$ for $i\in[n]$. However, since the $(n+1)$-th cluster is observed under $A_i=1$, there is a covariate shift and we need to apply propensity score weighting.
    
     Next, under the within-cluster exchangeability condition, we further have $$g(s_{i\sigma(1)},\cdots,s_{i\sigma(M)})=g(s_{i1},\cdots,s_{iM}),~~~\text{for all}~~~i\in [n+1],$$ 
     where $\sigma(1),\dots, \sigma(M)$ is any permutation of indices  $1,\dots, M$. 
Let $E_z$ denote the observed event that $$\{\{S_{1,1},\cdots,S_{1,M}\},\cdots, \{S_{n+1,1},\cdots,S_{n+1,M}\}\}=\{\{s_{1,1},\cdots,s_{1,M}\},\cdots, \{s_{n+1,1},\cdots,s_{n+1,M}\}\}.$$ 
Since the dataset can be labeled in any order, this equality only assume these two sets are the same, regardless of the order.      
We further define \\ $\{\sigma(1,1),\dots, \sigma(1,M),\dots, \sigma(n+1,1), \dots, \sigma(n+1,M)\}$ as a permutation of indices $\{(1,1),\dots, (1,M),\dots, (n+1,1), \dots, (n+1,M)\}$ by first permuting the cluster indices (the first entry) and then the individual indices (the second entry).
Following the proof procedure of Lemma 3 in \cite{tibshirani2019conformal} but in a hierarchical exchangeable setting, we obtain
       \begin{align*}
       P(S_{n+1,1}=s_{i,j}|E_z)&=\frac{\sum_{\sigma:\sigma(n+1,1)=(i,j)}f(\{s_{\sigma(1,1)},\cdots,s_{\sigma(1,M)}\},\cdots, \{s_{\sigma(n+1,1)},\cdots,s_{\sigma(n+1,M)}\})}{\sum_{\sigma}f(\{s_{\sigma(1,1)},\cdots,s_{\sigma(1,M)}\},\cdots, \{s_{\sigma(n+1,1)},\cdots,s_{\sigma(n+1,M)}\})}\\&=\frac{\sum_{\sigma:\sigma(n+1,1)=(i,j)}\Pi_{i=1}^{n+1} h_i(\mathbf{B}_{\sigma(i)})g(\{s_{\sigma(i,1)},\cdots,s_{\sigma(i,M)}\})}{\sum_{\sigma}\Pi_{i=1}^{n+1} h_i(\mathbf{B}_i)g(\{s_{\sigma(i,1)},\cdots,s_{\sigma(i,M)}\})}\\&=\frac{w(B_i)}{M\cdot w(B_1)+\cdots+ M\cdot w(B_n)+M\cdot w(B_{n+1})}:=p_{i,j}^w.
       \end{align*}
       The second equality follows from the fact that $h_i(\cdot)=1$ for all $i\in [n]$ and the factors involving $g$ are all cancelled. 
       Therefore, the final prediction set with coverage level $1-\alpha$ is constructed as 
       \begin{align*}
        \widetilde{C}_n(B_{n+1,1})=\bigg\{y\in R: S(B_{n+1,1})\le Q_{1-\alpha}\Big(\sum_{i=1}^n\sum_{j=1}^M p_{i,j}^{w}\delta_{S(B_{i,j},y_{i,j})}+\sum_{j=1}^M p^{w}_{n+1,j}\delta_{+\infty}\Big)\bigg\}.
       \end{align*}
    \end{itemize}
    
 \bibliographystyle{apalike}
 \bibliography{ref}

